\documentclass[a4, 11pt]{article}
\usepackage{wakaiki_singlecolumn_Eng}

\newcommand{\Int}{\mathop{\rm Int~\!}}
\newcommand{\Cl}{\mathop{\rm Cl~\!}}
\usepackage{graphicx, subcaption}      

\title{
Quantized Feedback Stabilization of Sampled-Data Switched Linear Systems
}
\author{Masashi Wakaiki and Yutaka Yamamoto
\thanks{M. Wakaiki and Y. Yamamoto are with the Department of Applied Analysis and Complex
Dynamical Systems, Graduate School of Informatics, Kyoto University, Kyoto
606-8501, Japan
(e-mail:{\tt  \ wakaiki@acs.i.kyoto-u.ac.jp};
{\tt \ yy@i.kyoto-u.ac.jp}).}%
}
\begin{document}
\maketitle
\thispagestyle{empty}
\pagestyle{empty}

\begin{abstract}                
We propose a stability analysis method for sampled-data switched linear
systems with quantization.
The available information to the controller
is limited: the quantized state and switching signal at
each sampling time.
Switching between sampling times can produce
the mismatch of the modes between the plant and the controller.
Moreover, the coarseness of quantization makes the trajectory wander
around, not approach, the origin.
Hence the trajectory may leave the desired neighborhood
if the mismatch leads to instability of the closed-loop system.
For the stability of the switched systems,
we develop a sufficient condition characterized by the {\em total mismatch time}.
The relationship between the mismatch time and the dwell time 
of the switching signal is also discussed.
\end{abstract}


\section{Introduction}
In this paper, we consider a
sampled-data switched linear system with a memoryless quantizer 
in Fig.~\ref{fig:WSDSLS}.
The available information to the controller is only
the quantized state and switching signal at each sampling time.
We then raise the questions:
{\em
What conditions are needed for the stability of the closed-loop system
under such imcomplete information?
If the system is stable, how close can the trajectories get to the origin?}

Switched systems and quantized control have been studied extensively but separately;
see, e.g., \cite{Liberzon2003Book, Lin2009} for switched systems
and \cite{Ishii2002Book, Nair2007} for quantized control.
Few works examine 
the state behavior of a switched system with quantization and
the effect of switching between sampling times. 
Recently, \cite{Liberzon2014} has proposed an
encoding and control strategy that achieves
sampled-data quantized state feedback stabilization of switched systems.
This strategy is rooted in
the non-switched case in \cite{Liberzon2003}.
In \cite{Liberzon2014}, the input of the controller is a
discrete-valued and discrete-time signal, whereas
the controller generates 
a continuous-valued and continuous-time output signal.
In contrast, here we consider a controller whose {\em output as well as input} are
{\em discrete-valued} and {\em discrete-time} signals.

 \begin{figure}[b]
 \centering
 \includegraphics[width = 7cm,bb= 60 175 700 460,clip]{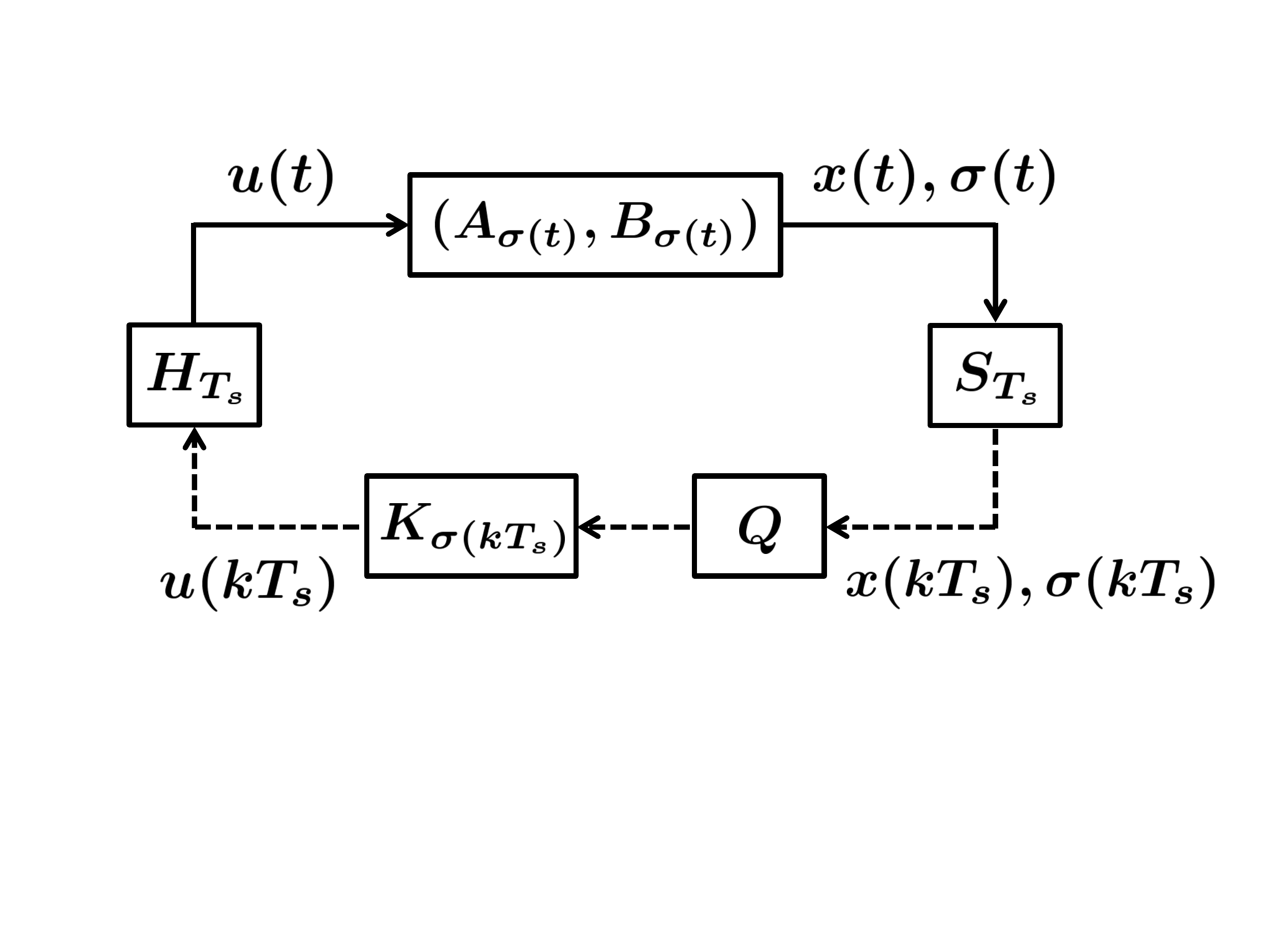}
 \caption{Sampled-data switched system with quantization, where $T_s$ is a sampling period}
 \label{fig:WSDSLS}
 \end{figure}

\cite{Ishii2002Book,Ishii2004} have studied the stability analysis
of a sampled-data non-switched system of a memoryless quantizer.
Since a memoryless quantizer does not give an accurate value of the state near the origin, 
asymptotic stability cannot be generally achieved. 
However, 
such a quantizer is useful because of the simplicity in implementation.
\cite{Ishii2004} have developed a sufficient condition
for a non-switched system to be quadratic attractive. The authors have also
provided a randomized algorithm to verify this stability property in a computationally efficient way.
In the present paper, we use this algorithm and a scheduling function 
with the revisitation property introduced in \cite{Liberzon2004}.
The combined method constructs  a common Lyapunov function guaranteeing
the quadratic attractiveness of each subsystem.
%
%

We face two challenges in the stability analysis of the switched system in Fig.~\ref{fig:WSDSLS}.
First, since only at sampling times we know which subsystem is active, 
we do not always use the feedback gain designed for the subsystem active at
the present time.
Therefore
the closed-loop system may become unstable when switching occurs between sampling times.
Second, after arriving at a certain neighborhood of the origin,
the trajectory may not approach the origin anymore due to the coarseness of quantization. 
This implies that the trajectory can leave
the desired neighborhood if switching makes the system unstable.

This paper is organized as follows. 
In Section~2, we state the switched system and 
the information structure together with basic assumptions.
In Section~3, we first investigate the growth rate of the common
Lyapunov function when switching occurs in a sampling interval.
Next we develop
a stability analysis method for the sampled-data switched system 
by using
the {\em total mismatch time}, the total time when the modes 
mismatch between the plant and the controller.
In Section~4, we briefly discuss the relationship between the mismatch time
and the dwell time of the switching signal.
Section~5 concludes this paper.

\noindent
{\bf Notation} \\
We denote by $\mathbb{Z}_+$ the set of non-negative integers
$\{k \in \mathbb{Z}:~k \geq 0\}$.
For a set $\Omega \subset \mathbb{R}^{\sf n}$, $\Cl (\Omega)$, 
$\Int (\Omega)$, and $\partial \Omega$ are its closure,
interior, and boundary, respectively.

Let $M^{\top}$ denote the transpose of $M \in \mathbb{R}^{\sf n\times m}$.
The Euclidean norm of $v \in \mathbb{R}^{\sf n}$ is defined by
$\|v\| = (v^{\top}v)^{1/2}$.
For $M \in \mathbb{R}^{\sf m\times n}$, its Euclidean induced norm is
defined by $\|M\| = \sup \{ \|Mv\|:~v\in \mathbb{R}^{\sf n},~\|v\|= 1 \}$ and
equals the largest singular value of $M$.
Let $\lambda_{\max}(P)$ and $\lambda_{\min}(P)$ denote
the largest and the smallest eigenvalue of $P \in \mathbb{R}^{\sf n\times n}$.

Let $T_s$ be a sampling period. For $t \geq 0$,
we define $[t]^-$ by
\begin{equation*}
[t]^- = kT_s\qquad \text{if~~~}kT_s \leq t < (k+1)T_s \quad (k \in \mathbb{Z}_+).
\end{equation*}

\section{Sampled-data Switched Systems with Quantization}
\subsection{Switched systems}
Consider the continuous-time switched linear system
\begin{equation}
\label{eq:SLS}
\dot x = A_{\sigma}x + B_{\sigma}u, 
\end{equation}
where $x(t) \in \mathbb{R}^{\sf n}$ is the state and 
$u(t) \in \mathbb{R}^{\sf m}$ is the control input.
For a finite index set $\mathcal{P}$, the mapping 
$\sigma:~[0,\infty) \to \mathcal{P}$ is right-continuous and piecewise constant.
We call $\sigma$ \textit{switching signal} and the discontinuities of $\sigma$ 
\textit{switching times}.

We assume that all subsystems are stabilizable and that
only finitely many switches occur on any finite interval:
\begin{assumption}
\label{ass:system}
For every $p \in \mathcal{P}$, $(A_p, B_p)$ is stabilizable, i.e., 
there exists $K_p \in \mathbb{R}^{\sf m \times n}$ such that
$A_p+B_pK_p$ is Hurwitz. 
Furthermore, every sampling interval has 
at most one switch.
\end{assumption}

\subsection{Quantized sampled-data system}
Let $T_s > 0$ be a sampling period.
The sampler $S_{T_s}$ is given by
\begin{equation*}
S_{T_s}:~ (x,\sigma) \mapsto (x(kT_s),\sigma(kT_s))\qquad (k \in \mathbb{Z}_+)
\end{equation*}
and
the zero-th hold $H_{T_s}$ by
\begin{equation*}
H_{T_s}:~u_d \mapsto u(t)=u_d(k),~~ t\in[kT_s,(k+1)T_s)~~ (k \in \mathbb{Z}_+).
\end{equation*}

We now state the definition of a memoryless quantizer $Q$ given in \cite{Ishii2004}.
For an index set $\mathcal{S}$,
the partition $\{\mathcal{Q}_j\}_{j \in \mathcal{S}}$ 
of $\mathbb{R}^{\sf n}$ is 
said to be \textit{finite} if for every bounded set $B$, there exists a
finite subset $\mathcal{S}_f$ of $\mathcal{S}$ such that
$B \subset \bigcup_{j \in \mathcal{S}_f} \mathcal{Q}_j$.
We define
the quantizer $Q$ 
with respect to the finite partition $\{\mathcal{Q}_j\}_{j \in \mathcal{S}_f}$
by
\begin{align*}
Q:~\mathbb{R}^{\sf n} &\to 
\{q_j \}_{j \in \mathcal{S}_f} \subset \mathbb{R}^{\sf n} \\
x &\mapsto q_j \quad \text{if~~} x \in \mathcal{Q}_j\quad (j \in \mathcal{S}_f).
\end{align*}

The second assumption is that $Q(x) = 0$ if $x$ is close to the origin. 
\begin{assumption}
\label{ass:quantization_near_origin}
If $\Cl (\mathcal{Q}_j)$ contains the origin, then $q_j = 0$.
\end{assumption}

Let $q_x$ be the output of the zero-th hold whose input is 
the quantized state at sampling times, i.e.,
$
q_x(t) = Q(x([t]^-)).
$
Note that in Fig. \ref{fig:WSDSLS}, the control input $u$ is given by
\begin{equation}
\label{eq:control_input}
u(t) = K_{\sigma([t]^-)}q_x(t).
\end{equation}

Let $P \in \mathbb{R}^{\sf n \times n}$ be positive-definite and 
define the quadratic Lyapunov function
$V(x) = x^{\top} P x$ for $x \in \mathbb{R}^{\sf n}$.
Its time derivative $\dot V$ along the trajectory of \eqref{eq:SLS} with \eqref{eq:control_input}
is given by
\begin{align}
\dot V(&x(t),q_x(t),\sigma(t))
= (A_{\sigma(t)}x(t)+B_{\sigma(t)}K_{\sigma([t]^-)}q_x(t))^{\top}Px(t)
\label{eq:dotV_def}
+x(t)^{\top}P(A_{\sigma(t)}x(t)+B_{\sigma(t)}K_{\sigma([t]^-)}q_x(t))
\end{align}
if $t$ is not a switching time or a sampling time.

For $p,q \in \mathcal{P}$ with $p \not= q$, 
we also define $\dot V_p$ and $\dot V_{p,q}$ by
\begin{align}
\dot V_p(x(t),q_x(t))
&=
(A_px(t)+B_pK_pq_x(t))^{\top}Px(t) 
 +x(t)^{\top}P(A_px(t)+B_pK_pq_x(t)) \notag \\
\dot V_{p,q}(x(t),q_x(t)) 
&=
(A_px(t)+B_pK_qq_x(t))^{\top}Px(t) 
+x(t)^{\top}P(A_px(t)+B_pK_qq_x(t)). 
\label{eq:Vpq_def}
\end{align}
Then
$\dot V_{p}$ and $\dot V_{p,q}$ are the time derivatives of $V$
along the trajectories of the systems $(A_p, B_pK_p)$ and $(A_p, B_pK_q)$, 
respectively.

Every individual mode is assumed to be 
stable in the following sense with
a common Lyapunov function:
\begin{assumption}
\label{ass:subsystem_QA}
Consider 
the following sampled-data non-switched systems with quantization:
\begin{equation}
\label{eq:subsystem_QSDS}
\dot x = A_px + B_pu, \quad u = K_pq_x \qquad (p \in \mathcal{P}).
\end{equation}
Let $C$ be a positive number and
suppose that $R$ and $r$ satisfy $R > r > 0$. Then
there exists a positive-definite matrix $P \in \mathbb{R}^{\sf n \times n}$
such that for all $p \in \mathcal{P}$, every trajectory $x$ of the system
\eqref{eq:subsystem_QSDS} with $x(0) \in \overline{\mathcal{E}}_P(R)$
satisfies
\begin{equation}
\label{eq:dotVp_bound}
\dot V_p(x(t),q_x(t)) \leq -C \|x(t)\|^2
\end{equation}
or $x(t) \in \underline{\mathcal{E}}_P(r)$ for $t \geq 0$,
where $\overline{\mathcal{E}}_P(R)$ and $\underline{\mathcal{E}}_P(r)$
are given by
\begin{align*}
\overline{\mathcal{E}}_P(R) &= 
\{
x \in \mathbb{R}^{\sf n}:~V(x) \leq R^2\lambda_{\max}(P)
\}\\
\underline{\mathcal{E}}_P(r) &= 
\{
x \in \mathbb{R}^{\sf n}:~V(x) \leq r^2\lambda_{\min}(P)
\}.
\end{align*}
\end{assumption}

Assumption \ref{ass:subsystem_QA} implies the followings:
If we have no switches, then the
common Lyapunov function $V$ decreases
at a certain rate until
$V \leq r^2\lambda_{\min}(P)$. 
Furthermore,
$\underline{\mathcal{E}}_P(r)$
as well as
$\overline{\mathcal{E}}_P(R)$ are invariant sets.

The objective of the present paper is to
find switching conditions for the switched system in Fig.~\ref{fig:WSDSLS} 
to arrive at some neighborhood of the origin and remain there. 
We also determine how small the neighborhood is.

\begin{remark}
{\bf (a)}
Let $\mathcal{B}(L)$ be
the closed ball in $\mathbb{R}^{\sf n}$ with center at 0 and radius $L$. 
The ellipsoid $\overline{\mathcal{E}}_P(R)$ is the \textit{smallest} level set of $V$
\textit{containing} $\mathcal{B}(R)$, whereas $\underline{\mathcal{E}}_P(r)$ is
the \textit{largest} level set of $V$ \textit{contained in} $\mathcal{B}(r)$. 

\noindent
{\bf (b)}
In the non-sampled case,
the existence of common Lyapunov functions is a sufficient condition 
for stability under arbitrary switching; see, e.g., \cite{Liberzon2003Book, Lin2009}.
For sampled-data switched systems, however,
such functions do not guarantees the stability because
a switch within a sampling interval may make the closed-loop system unstable. 

\noindent
{\bf (c)}
For plants with a single mode,
\cite{Ishii2004}
proposed a randomized algorithm for the computation of $P$ in 
Assumption~\ref{ass:subsystem_QA}.
Combining the algorithm with a scheduling function that has the 
revisitation property in \cite{Liberzon2004},
we can efficiently compute the desired common Lyapunov function.
Since this is an immediate consequence of the above two works,
we omit the details.

\noindent
{\bf (d)}
Assumption \ref{ass:subsystem_QA} does not cover
the trajectory after switches even without mode mismatch.
If the mode changes $p_1 \to p_2 \to p_1$ 
at the switching times $t_1$ and $t_2$ 
on a sampling interval $(kT_s, (k+1)T_s)$,
then \eqref{eq:dotVp_bound} holds only for $t \in (T,t_1)$.
In Assumption \ref{ass:system}, we therefore assume that 
at most one switch occurs on
a sampling interval.
%
\end{remark}

\section{Stabilization with Limited Information}
\subsection{Upper bounds of $\dot V_{p,q}$}
Assumption~\ref{ass:subsystem_QA} gives an upper bound \eqref{eq:dotVp_bound}
of $\dot V_{p}$, i.e., $\dot V$ when we use the feedback gain designed for
the currently active subsystem. 
In this subsection,
we will
find an upper bound of $\dot V_{p,q}$, i.e.,
$\dot V$ when intersample switching leads to the
mismatch of the modes between the plant and the feedback gain.
To this end, we investigate the state behavior in sampling intervals.

Let us first examine the relationship among 
the original state $x(t)$, the sampled state $x([t]^-)$,
and the sampled quantized state $q_x(t)$.

The partition $\{\mathcal{Q}_j\}_{j \in \mathcal{S}_f}$ is finite. 
Moreover, 
Assumption \ref{ass:quantization_near_origin} shows that
if $\xi_{k} \to 0$ ($k \to \infty$) for some sequence $\{\xi_k\} \subset \mathcal{Q}_j$,
then $Q(x) = 0$ for all $x \in \mathcal{Q}_j$.
Hence
there exists $\alpha_0 > 0$ such that
\begin{align}
\|B_pK_q Q(x)\| \leq \alpha_0 \|x\|, \label{eq:alpha0_bound}
\end{align}
for $p,q \in \mathcal{P}$ and $x \in
\overline{\mathcal{E}}_P(R)$.
We also define $\Lambda$ by 
\[
\Lambda = \max_{p \in \mathcal{P}} \|A_p\|.
\]

The next result gives an upper bound on
the norm of the sampled state $x([t]^-)$ with
the original state $x(t)$.

\begin{lemma}
\label{lem:alpha1_bound}
Consider the swithced system \eqref{eq:SLS} with
\eqref{eq:control_input}, where $\sigma$ has finitely many switching times
on every finite interval.
Suppose that
\begin{equation}
\label{eq:alpha_0_condition}
\eta := \alpha_0 \frac{e^{\Lambda T_s} - 1}{\Lambda} < 1,
\end{equation}
and define $\alpha_1$ by
\begin{align*}
\alpha_1 = 
\frac{e^{\Lambda T_s}}{1 - \eta}.
\end{align*}
Then 
we have
\begin{align}
\label{eq:alpha1_bound}
\| x([t]^-)\| <
\alpha_1 \|x(t)\|
\end{align}
for all $t \geq 0$ with 
$x([t]^-) \in
\overline{\mathcal{E}}_P(R)$.
\end{lemma}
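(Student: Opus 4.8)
The plan is to work inside the single sampling interval containing $t$ and to propagate the state \emph{backward}, from $t$ to the sampling instant $[t]^-$. Fix $t \geq 0$ and write $[t]^- = kT_s$, so that $t - kT_s < T_s$. On $[kT_s,(k+1)T_s)$ the control input equals the constant vector $K_{\sigma(kT_s)}Q(x(kT_s))$, and by Assumption~\ref{ass:system} the signal $\sigma$ has at most one switching time there; hence on $(kT_s,t)$ the trajectory satisfies $\dot x(s) = A_{\sigma(s)}x(s) + B_{\sigma(s)}K_{\sigma(kT_s)}Q(x(kT_s))$. Let $\Phi(\cdot,\cdot)$ be the transition matrix of the (possibly switched) homogeneous part $\dot z = A_{\sigma(s)}z$. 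Since $\|A_p\|\leq \Lambda$ for every $p$ and the interval contains at most one switch, $\Phi(s,s')$ is a product of at most two matrix exponentials of the $A_p$'s, so $\|\Phi(s,s')\| \leq e^{\Lambda|s-s'|}$ for all $s,s' \in [kT_s,(k+1)T_s]$, in \emph{both} time directions.

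Solving the variation-of-constants formula for $x(kT_s)$ in terms of $x(t)$ gives
\begin{equation*}
x(kT_s) = \Phi(kT_s,t)x(t) - \int_{kT_s}^{t}\Phi(kT_s,s)\,B_{\sigma(s)}K_{\sigma(kT_s)}Q(x(kT_s))\,ds.
\end{equation*}
Taking norms, using $\|\Phi(kT_s,t)\|\leq e^{\Lambda(t-kT_s)}\leq e^{\Lambda T_s}$ and $\|\Phi(kT_s,s)\|\leq e^{\Lambda(s-kT_s)}$, and invoking \eqref{eq:alpha0_bound} for the integrand (legitimate because the hypothesis gives $x(kT_s)=x([t]^-)\in\overline{\mathcal{E}}_P(R)$), we obtain
\begin{equation*}
\|x(kT_s)\| \leq e^{\Lambda T_s}\|x(t)\| + \alpha_0\|x(kT_s)\|\int_{kT_s}^{t}e^{\Lambda(s-kT_s)}\,ds
\leq e^{\Lambda T_s}\|x(t)\| + \eta\,\|x(kT_s)\|,
\end{equation*}
since $\int_{kT_s}^{t}e^{\Lambda(s-kT_s)}\,ds = (e^{\Lambda(t-kT_s)}-1)/\Lambda \leq (e^{\Lambda T_s}-1)/\Lambda = \eta/\alpha_0$. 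Because $\eta < 1$ by \eqref{eq:alpha_0_condition}, rearranging yields $\|x([t]^-)\| \leq \alpha_1\|x(t)\|$; the strict inequality in \eqref{eq:alpha1_bound} then follows because $\eta>0$ and $t-[t]^- < T_s$ (which also makes the exponential bounds strict when $t>[t]^-$).

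I expect no deep obstacle — this is a Gr\"onwall/variation-of-constants estimate — but the one genuinely nonobvious choice is to integrate \emph{backward} from $t$ to $[t]^-$: doing it forward and applying the reverse triangle inequality only produces an implicit bound whose denominator can be nonpositive, which would require a condition strictly stronger than $\eta<1$ and would give a different, worse constant. Two minor points also need care: the bound $\|\Phi(s,s')\|\leq e^{\Lambda|s-s'|}$ must be checked in the reverse time direction across the single intersample switch, and one should note that the argument uses only membership of the \emph{sampled} state $x([t]^-)$ in $\overline{\mathcal{E}}_P(R)$ — no a priori bound on the intermediate states $x(s)$ is needed, which is exactly why the hypothesis is phrased as it is.
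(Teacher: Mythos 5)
Your proof is correct and follows essentially the same route as the paper: both express $x([t]^-)$ in terms of $x(t)$ via the backward (inverted) variation-of-constants formula on the sampling interval, bound the transition matrix by $e^{\Lambda T_s}$ and the input integral by $\eta\|x([t]^-)\|$ using \eqref{eq:alpha0_bound}, and rearrange using $\eta<1$. The only cosmetic difference is that you invoke Assumption~\ref{ass:system} to limit the interval to one switch, whereas the paper's estimate handles arbitrarily many switches in the interval directly.
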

\begin{proof}
It suffices to prove \eqref{eq:alpha1_bound} for
$x(0) \in \overline{\mathcal{E}}_P(R)$ and 
$t \in [0,T_s)$.

Let $\Phi(\tau_1,\tau_2)$ denote the state-transition matrix of 
the switched system \eqref{eq:SLS}
for $\tau_1 \geq \tau_2$.
If a switch does not occur, $\Phi(\tau_1,\tau_2)$ is given by 
$\Phi(\tau_1,\tau_2) = e^{(\tau_1-\tau_2)A_{\sigma(0)}}$.
If $t_1,t_2,\dots, t_m$ are switching times on an interval $[\tau_2,\tau_1)$
and if we define $t_0=\tau_2$ and $t_{m+1}=\tau_1$, then
we have
\begin{align*}
\Phi(\tau_1,\tau_2) = 
\prod_{k=0}^{m} e^{(t_{k+1} - t_k)A_{\sigma(t_k)} }
\end{align*}
Since
\begin{equation}
\label{eq:state_t}
x(t) = \Phi(t,0) x(0) + \int^{t}_0 \Phi(t,\tau)
B_{\sigma(\tau)} K_{\sigma(0)} q_x(\tau) d \tau
\end{equation}
and since $\Phi(\tau,0)^{-1} = \Phi(t,0)^{-1}\Phi(t,\tau)$, it follows that
\begin{equation*}
x(0) = \Phi(t,0)^{-1} x(t) + \int^{t}_0 \Phi(\tau,0)^{-1}
B_{\sigma(\tau)} K_{\sigma(0)} q_x(\tau) d \tau.
\end{equation*}
This leads to
\begin{align}
\|x(0)\| \leq &\|\Phi(t,0)^{-1}\| \cdot \|x(t)\| + \left\| \int^{t}_0 \Phi(\tau,0)^{-1}
B_{\sigma(\tau)} K_{\sigma(0)} q_x(\tau) d \tau \right\|.
\label{eq:x0_bound}
\end{align}
Let $t_1,t_2,\dots, t_m$ be switching times on the interval $[0,t)$.
Since $\| e^{\tau A}\| \leq e^{\tau\|A\|}$ for $\tau \geq 0$,
we obtain
\begin{align}
&\|\Phi(t,0)^{-1} \| \leq
e^{t_1\| A_{\sigma(0)} \|} \cdot
\prod_{k=1}^{m-1}
e^{(t_{k+1} - t_{k}) \| A_{\sigma(t_{k})} \| } \cdot
e^{(t - t_m) \| A_{\sigma(t_m)} \| } leq
e^{\Lambda t} < e^{\Lambda T_s} \label{eq:state_map_bound}.
\end{align}
It is obvious that the equation above holds in the non-switched case.
Since $q_x(\tau) = q_x(0) = Q(x(0))$ when $0 \leq \tau \leq t~(< T_s)$, 
if follows from \eqref{eq:alpha0_bound} that
\begin{align}
\left\| \int^{t}_0 \Phi(\tau,0)^{-1}
B_{\sigma(\tau)} K_{\sigma(0)} q_x(\tau) d \tau \right\| 
&\leq
\int^{t}_0
\|\Phi(\tau, 0)^{-1} \| \cdot \|B_{\sigma(\tau)} K_{\sigma(0)} q_x(\tau) \| d\tau 
\notag \\
&\leq
\alpha_0 \int^{t}_0 e^{\Lambda \tau} d\tau  \|x(0)\| \notag \\
&\leq
\alpha_0  \frac{e^{\Lambda T_s} - 1}{\Lambda}  \|x(0)\|
= \eta \|x(0)\|.
\label{eq:int_state_map_bound}
\end{align}
Substituting \eqref{eq:state_map_bound} and \eqref{eq:int_state_map_bound}
into \eqref{eq:x0_bound},
we obtain
\begin{equation*}
\|x(0)\| < e^{\Lambda T_s} \|x(t)\| + 
\eta \|x(0)\|.
\end{equation*}
Thus if \eqref{eq:alpha_0_condition} holds, we derive
\eqref{eq:alpha1_bound}.
\end{proof}

Let us next develop an upper bound on the norm of
the error $x(t) - x([t]^-)$ due to sampling.
To this end, 
we show the following proposition:
\begin{proposition}
\label{prop:state_transition_bound}
Let $\Phi(t,0)$ be the state-transition map of 
the switched system \eqref{eq:SLS}
as above. Then
\begin{equation}
\label{eq:Phi_1_diff_bound}
\|\Phi(t,0) - I\| \leq e^{\Lambda t} - 1.
\end{equation}
\end{proposition}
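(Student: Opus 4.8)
The plan is to express $\Phi(t,0)-I$ as an integral of $\frac{d}{ds}\Phi(s,0)$ over $[0,t]$, using the fact that the state-transition matrix satisfies the matrix differential equation $\frac{d}{ds}\Phi(s,0) = A_{\sigma(s)}\Phi(s,0)$ almost everywhere (it fails only at the finitely many switching times, which is a null set and does not affect the integral). Since $\Phi(0,0)=I$, this gives the identity
\begin{equation*}
\Phi(t,0) - I = \int_0^t A_{\sigma(s)} \Phi(s,0)\, ds.
\end{equation*}
Then I would take norms inside the integral, bound $\|A_{\sigma(s)}\| \le \Lambda$ by the definition of $\Lambda$, and bound $\|\Phi(s,0)\| \le e^{\Lambda s}$ by the same product-of-exponentials estimate already used in the proof of Lemma~\ref{lem:alpha1_bound} (equation \eqref{eq:state_map_bound} gives exactly this type of bound, since $\|e^{\tau A}\| \le e^{\tau\|A\|}$ and the switching times partition $[0,s)$). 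This yields
\begin{equation*}
\|\Phi(t,0) - I\| \le \int_0^t \Lambda\, e^{\Lambda s}\, ds = e^{\Lambda t} - 1,
\end{equation*}
which is exactly \eqref{eq:Phi_1_diff_bound}.

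Alternatively, if one prefers to avoid invoking the differential equation across switching instants, I would argue piecewise: on each subinterval $[t_k, t_{k+1})$ between consecutive switching times, $\Phi(\cdot,0)$ evolves under a single $A_{\sigma(t_k)}$, so a Gronwall-type or direct Duhamel estimate applies on that piece, and I would then chain the estimates across the (finitely many) subintervals using the cocycle property $\Phi(t,0) = \Phi(t,t_m)\cdots\Phi(t_1,0)$ together with a telescoping bound on $\|\Phi(t_{k+1},t_k) - I\|$. Either route is routine once the integral identity is set up.

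The only mildly delicate point is the justification that $\Phi(s,0)$ is absolutely continuous in $s$ with the stated derivative despite the discontinuities of $\sigma$; this is standard for switched linear systems with finitely many switches on finite intervals (as assumed throughout), since $\Phi$ is a finite product of matrix exponentials and each factor is smooth in its own time argument. I do not anticipate any genuine obstacle — the estimate is a direct analogue of the scalar bound $|e^{\Lambda t}-1| = \int_0^t \Lambda e^{\Lambda s}\,ds$, and all the norm inequalities needed ($\|e^{\tau A}\| \le e^{\tau\|A\|}$, submultiplicativity, $\|A_p\| \le \Lambda$) have already appeared in the excerpt.
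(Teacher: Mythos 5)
Your main argument is correct, but it takes a genuinely different route from the paper. You write $\Phi(t,0)-I=\int_0^t A_{\sigma(s)}\Phi(s,0)\,ds$ (valid since $\Phi(\cdot,0)$ is continuous and piecewise $C^1$ with only finitely many switching times, so the fundamental theorem of calculus applies), then bound the integrand by $\Lambda e^{\Lambda s}$ and integrate to get $e^{\Lambda t}-1$ in one shot. The paper instead argues purely algebraically: it first proves the non-switched bound $\|e^{tA}-I\|\le e^{\Lambda t}-1$ by taking norms termwise in the power series of $e^{tA}-I$, and then handles switching by induction on the number of switches $m$, using the identity $XY-I=X(Y-I)+(X-I)$ applied to the product of exponentials together with $e^{\Lambda(t-t_m)}\bigl(e^{\Lambda t_m}-1\bigr)+\bigl(e^{\Lambda(t-t_m)}-1\bigr)=e^{\Lambda t}-1$; this is essentially the ``alternative'' chaining route you sketch in your second paragraph. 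Your integral approach is shorter and avoids the induction, at the price of invoking the a.e.\ differentiability of the transition matrix across switching instants (which you correctly flag and which is standard here); the paper's approach is longer but entirely elementary, needing only sub-multiplicativity and the exponential series. Both yield the identical bound, so there is no gap in either.
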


\begin{proof}
Let us first show the case without switching, i.e.,
\begin{equation}
\label{eq:no_switching_matrix_expo_diff_bound}
\|e^{tA_{\sigma(0)}} - I \| \leq e^{\Lambda t} - 1.
\end{equation}
Define the partial sum $S_N$ of $e^{tA_{\sigma(0)}} - I$ by
\begin{equation*}
S_N(t) = 
\sum_{k=0}^N \frac{1}{k!} (tA_{\sigma(0)})^k - I
=
\sum_{k=1}^N \frac{1}{k!} (tA_{\sigma(0)})^k 
\end{equation*}
Then for $t \geq 0$
\begin{align*}
\|S_N(t)\| &\leq
\sum_{k=1}^N \frac{1}{k!} \left(t\|A_{\sigma(0)}\| \right)^k  \\
&=\sum_{k=0}^N \frac{1}{k!} \left(t\|A_{\sigma(0)}\|\right)^k  - 1\\
&\leq \sum_{k=0}^\infty \frac{1}{k!} \left(t\|A_{\sigma(0)}\|\right)^k  - 1\\
&= e^{t\|A_{\sigma(0)}\|} - 1 
\leq e^{\Lambda t} - 1.
\end{align*}
If we let $N \to \infty$, we obtain \eqref{eq:no_switching_matrix_expo_diff_bound}.

We now prove \eqref{eq:Phi_1_diff_bound}
in the switched case.
Let $t_1,t_2,\dots,t_m$ be the switching times in the interval $[0,t)$.
Let $t_0 = 0$ and $t_{m+1} = t$.
Then \eqref{eq:Phi_1_diff_bound} is equivalent to
\begin{align}
\left\|~
\prod_{k=0}^{m} 
e^{(t_{k+1} - t_{k})A_{\sigma(t_{k})}}- I~
\right\| 
\leq e^{\Lambda t} - 1.
\label{eq:matrix_expo_diff_bound}
\end{align}

We have already shown the case $m=0$, i.e., the non-switched case.
The general case follows by induction. For $m \geq 1$,
\begin{align*}
\left\| \prod_{k=0}^{m} 
e^{(t_{k+1} - t_{k})A_{\sigma(t_{k})}}- I 
\right\| 
&\leq
\left\| e^{(t_{m+1} - t_{m})A_{\sigma(t_{m})}} \left(
\prod_{k=0}^{m-1} e^{(t_{k+1} - t_{k})A_{\sigma(t_{k})}}
- I \right) \right\| 
+ 
\| e^{(t_{m+1} - t_{m})A_{\sigma(t_{m})}}- I \| \\
&\leq\| e^{(t_{m+1} - t_{m})A_{\sigma(t_{m})}}\| \cdot
\left\|
\prod_{k=0}^{m-1}
e^{(t_{k+1} - t_{k}) A_{\sigma(t_{k})} } - I
\right\|  
+ 
\| e^{(t_{m+1} - t_{m})A_{\sigma(t_m)}}- I \|.
\end{align*}
Hence if \eqref{eq:matrix_expo_diff_bound} holds with
$m-1$ in place of $m$, then
\begin{align*}
\| e^{(t_{m+1} - t_{m})A_{\sigma(t_{m})}}\| \cdot
\left\|
\prod_{k=0}^{m-1}
e^{(t_{k+1} - t_{k}) A_{\sigma(t_{k})} } - I
\right\| 
&+ 
\| e^{(t_{m+1} - t_{m})A_{\sigma(t_m)}}- I \| \\
&\leq
e^{\Lambda (t_{m+1} - t_{m})}
(e^{\Lambda t_{m}} - 1) + (e^{\Lambda (t_{m+1} - t_{m})} - 1) \\
&= e^{\Lambda t } - 1.
\end{align*}
Thus we obtain \eqref{eq:matrix_expo_diff_bound}.
\end{proof}

\begin{lemma}
\label{lem:beta1_bound}
Consider the switched system \eqref{eq:SLS} with \eqref{eq:control_input}, 
where $\sigma$ has finitely many switching times
on every finite interval.
Define $\beta_1$ by
\begin{align*}
\beta_1 = (e^{\Lambda T_s} - 1) \left( 1 +  \frac{\alpha_0}{\Lambda} \right)
\end{align*}
Then we have
\begin{equation}
\label{eq:beta1_bound}
\|x(t) - x([t]^-) \| < \beta_1 \|x([t]^-)\|
\end{equation}
for all $t \geq 0$ with 
$x([t]^-) \in
\overline{\mathcal{E}}_P(R)$.
\end{lemma}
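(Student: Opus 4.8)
The plan is to mimic the reduction used in the proof of Lemma~\ref{lem:alpha1_bound}: since $[t]^-$ and the held quantized input reset at every sampling instant, it suffices to establish \eqref{eq:beta1_bound} for $x(0) \in \overline{\mathcal{E}}_P(R)$ and $t \in [0,T_s)$. On such an interval $q_x(\tau) = Q(x(0))$ for all $\tau \in [0,t)$, so from the variation-of-constants formula \eqref{eq:state_t} I would write the sampling error as
\begin{equation*}
x(t) - x(0) = (\Phi(t,0) - I)x(0) + \int_0^t \Phi(t,\tau)\, B_{\sigma(\tau)} K_{\sigma(0)} q_x(\tau)\, d\tau,
\end{equation*}
and then bound the two terms separately.

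For the first term, Proposition~\ref{prop:state_transition_bound} gives $\|\Phi(t,0) - I\| \le e^{\Lambda t} - 1$ immediately. For the integral term, the product-of-exponentials estimate already used in \eqref{eq:state_map_bound} yields $\|\Phi(t,\tau)\| \le e^{\Lambda(t-\tau)}$ for $0 \le \tau \le t$, while \eqref{eq:alpha0_bound} (applicable since $x(0) \in \overline{\mathcal{E}}_P(R)$) gives $\|B_{\sigma(\tau)} K_{\sigma(0)} q_x(\tau)\| = \|B_{\sigma(\tau)} K_{\sigma(0)} Q(x(0))\| \le \alpha_0\|x(0)\|$. Hence
\begin{equation*}
\left\| \int_0^t \Phi(t,\tau)\, B_{\sigma(\tau)} K_{\sigma(0)} q_x(\tau)\, d\tau \right\| \le \alpha_0 \|x(0)\| \int_0^t e^{\Lambda(t-\tau)}\, d\tau = \alpha_0\, \frac{e^{\Lambda t} - 1}{\Lambda}\, \|x(0)\|.
\end{equation*}

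Adding the two estimates and factoring out $e^{\Lambda t}-1$ gives $\|x(t) - x(0)\| \le (e^{\Lambda t}-1)\bigl(1 + \alpha_0/\Lambda\bigr)\|x(0)\|$; since $t < T_s$ we have $e^{\Lambda t}-1 < e^{\Lambda T_s}-1$, which upgrades this to the strict bound \eqref{eq:beta1_bound}. I expect no real obstacle: the only point needing a word of care is the uniform bound $\|\Phi(t,\tau)\| \le e^{\Lambda(t-\tau)}$ in the presence of intersample switching, but this is precisely the submultiplicativity argument behind \eqref{eq:state_map_bound} and uses only finiteness of the switching set, not the one-switch-per-interval hypothesis of Assumption~\ref{ass:system}. (The strict inequality in the degenerate case $x(0)=0$ is vacuous, since then $q_x\equiv 0$ and $x\equiv 0$; otherwise $\|x(0)\|>0$ and the strictness is genuine.)
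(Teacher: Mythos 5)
Your proposal is correct and follows essentially the same route as the paper's own proof: the same reduction to $t\in[0,T_s)$, the same decomposition of $x(t)-x(0)$ via \eqref{eq:state_t}, the bound $\|\Phi(t,0)-I\|\leq e^{\Lambda t}-1$ from Proposition~\ref{prop:state_transition_bound}, and the integral estimate via $\|\Phi(t,\tau)\|\leq e^{\Lambda(t-\tau)}$ together with \eqref{eq:alpha0_bound}. Your added remarks on the degenerate case $x(0)=0$ and on where strictness comes from are finer points than the paper itself records, but they do not change the argument.
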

\begin{proof}
As in the proof of Lemma \ref{lem:alpha1_bound}, it suffices to prove
\eqref{eq:beta1_bound} for all 
$x(0) \in \overline{\mathcal{E}}_P(R)$ and 
$t \in [0,T_s)$.

By \eqref{eq:state_t}, we obtain
\begin{align*}
x(t) - x(0) = 
(\Phi(t,0) - I) x(0)  + \int^{t}_0 \Phi(t,\tau)
B_{\sigma(\tau)} K_{\sigma(0)} q_x(\tau) d \tau.
\end{align*}
This leads to
\begin{align}
\|x(t) - x(0)\| \leq
\| \Phi(t,0) - I \| \cdot \|x(0) \| + \left\| \int^{t}_0 \Phi(t,\tau)
B_{\sigma(\tau)} K_{\sigma(0)} q_x(\tau) d \tau \right\|.
\label{eq:xt_x0_dif_bound}
\end{align}

Proposition \ref{prop:state_transition_bound} provides
the following upper bound on the first term of the right side of 
\eqref{eq:xt_x0_dif_bound}:
\begin{align}
\label{eq:first_term_upper}
\|\Phi(t,0) - I\| \leq
e^{\Lambda t} - 1 < e^{\Lambda T_s} - 1.
\end{align}

Since a
calculation similar to \eqref{eq:state_map_bound} shows that
$\| \Phi (t,\tau) \| \leq e^{\Lambda(t-\tau)}$. 
Hence as in \eqref{eq:int_state_map_bound},
\begin{align}
\left\| 
\int^{t}_0 \Phi(t,\tau)
B_{\sigma(\tau)} K_{\sigma(0)} q_x(\tau) d \tau \right\|
&\leq
\alpha_0 \frac{e^{\Lambda T_s} - 1}{\Lambda} \|x(0)\|.
\label{eq:int_Phit_tau}
\end{align}

We obtain \eqref{eq:beta1_bound} by
combining \eqref{eq:first_term_upper}
with \eqref{eq:int_Phit_tau}.
\end{proof}

Similarly to \eqref{eq:alpha0_bound},
to each $p,q \in \mathcal{P}$ with $p\not=q$, 
there correspond a positive number
$\gamma_0(p,q)$ such that
\begin{align}
\label{eq:beta2_bound}
\|PB_pK_q(Q(x) - x)\| &\leq \gamma_0(p,q) \|x\|
\end{align}
for $x \in
\overline{\mathcal{E}}_P(R)$.

We are now in a position to obtain upper bounds on
the norm of  
the error 
$q_x(t) - x(t)$ due to
sampling and quantization by using
the original state $x(t)$.
\begin{theorem}
\label{thm:alpha_beta_bound}
Consider the switched system \eqref{eq:SLS} with \eqref{eq:control_input}, 
where $\sigma$ has finitely many switching times
on every finite interval.
Define $\alpha_1$ and $\beta_1$ as in Lemmas 
\ref{lem:alpha1_bound} and \ref{lem:beta1_bound}.
If $\beta(p,q)$ are defined by
\begin{align*}
\gamma(p,q) &= \alpha_1 (\beta_1\|PB_pK_q\|  + \gamma_0(p,q)),
\end{align*}
then $\beta(p,q)$ satisfy
\begin{align}
\|PB_pK_q(q_x(t) - x(t))\| &< \gamma(p,q)\|x(t)\| \label{eq:beta_bound}
\end{align}
for all $t \geq 0$ with 
$x([t]^-) \in
\overline{\mathcal{E}}_P(R)$.
\end{theorem}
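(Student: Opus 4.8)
The plan is to decompose the error $q_x(t)-x(t)$ into a pure quantization part and a pure sampling part, bound each with a result already in hand, and then convert the resulting bound — which will naturally be expressed in terms of $\|x([t]^-)\|$ — into one in terms of $\|x(t)\|$ via Lemma \ref{lem:alpha1_bound}. Concretely, since $q_x(t)=Q(x([t]^-))$, I would write
\begin{equation*}
PB_pK_q\bigl(q_x(t)-x(t)\bigr)
= PB_pK_q\bigl(Q(x([t]^-))-x([t]^-)\bigr) + PB_pK_q\bigl(x([t]^-)-x(t)\bigr),
\end{equation*}
and apply the triangle inequality.

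For the first term, the hypothesis $x([t]^-)\in\overline{\mathcal{E}}_P(R)$ is exactly what is needed to invoke \eqref{eq:beta2_bound}, giving $\|PB_pK_q(Q(x([t]^-))-x([t]^-))\|\le \gamma_0(p,q)\|x([t]^-)\|$. For the second term, I would pull out the operator norm, $\|PB_pK_q(x([t]^-)-x(t))\|\le \|PB_pK_q\|\cdot\|x(t)-x([t]^-)\|$, and then use Lemma \ref{lem:beta1_bound} — again legitimate because $x([t]^-)\in\overline{\mathcal{E}}_P(R)$ — to get $\|x(t)-x([t]^-)\|<\beta_1\|x([t]^-)\|$. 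Summing the two estimates yields
\begin{equation*}
\|PB_pK_q(q_x(t)-x(t))\| < \bigl(\beta_1\|PB_pK_q\| + \gamma_0(p,q)\bigr)\|x([t]^-)\|.
\end{equation*}

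Finally, I would apply Lemma \ref{lem:alpha1_bound} — whose hypothesis is once more precisely $x([t]^-)\in\overline{\mathcal{E}}_P(R)$ (and whose constant $\alpha_1$ is well defined under \eqref{eq:alpha_0_condition}, as tacitly assumed in the theorem statement) — to replace $\|x([t]^-)\|$ by $\alpha_1\|x(t)\|$, obtaining
\begin{equation*}
\|PB_pK_q(q_x(t)-x(t))\| < \alpha_1\bigl(\beta_1\|PB_pK_q\| + \gamma_0(p,q)\bigr)\|x(t)\| = \gamma(p,q)\|x(t)\|,
\end{equation*}
which is \eqref{eq:beta_bound}. Honestly there is no deep obstacle here: the whole argument is a triangle-inequality split followed by three bookkeeping citations, and the only point requiring a little care is verifying that the common hypothesis $x([t]^-)\in\overline{\mathcal{E}}_P(R)$ genuinely licenses all three of \eqref{eq:beta2_bound}, Lemma \ref{lem:beta1_bound}, and Lemma \ref{lem:alpha1_bound}, together with keeping the strict inequality consistent through the chain (it survives because the estimates from the two lemmas are already strict).
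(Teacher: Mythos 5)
Your proposal is correct and follows essentially the same route as the paper's own proof: the identical triangle-inequality split of $q_x(t)-x(t)$ through $x([t]^-)$, the bound \eqref{eq:beta2_bound} on the quantization part, Lemma \ref{lem:beta1_bound} on the sampling part, and Lemma \ref{lem:alpha1_bound} to convert $\|x([t]^-)\|$ into $\alpha_1\|x(t)\|$. Your attention to the common hypothesis $x([t]^-)\in\overline{\mathcal{E}}_P(R)$ and to the strictness of the final inequality matches the paper's argument.
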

\begin{proof}
It follows from \eqref{eq:beta1_bound} and \eqref{eq:beta2_bound} that
\begin{align*}
\|PB_pK_q(q_x(t) - x(t))\| 
&\leq 
\|PB_pK_q(q_x(t) - x([t]^-)) \|  + \|PB_pK_q\|\cdot\| x([t]^-) - x(t) \| \\
&<
(\beta_1\|PB_pK_q\| + \gamma_0(p,q)) \|x([t]^-)\| \\
&< \alpha_1 (\beta_1\|PB_pK_q\| + \gamma_0(p,q))\|x(t)\|.
\end{align*}
Thus the second inequality \eqref{eq:beta_bound} holds.
\end{proof}

An upper bound on $\dot V_{p,q}$
can be obtained as follows.


Since $\dot V_{p,q}$ satisfies
\begin{align*}
\dot V_{p,q}(x(t),q_x(t)) 
=
2x(t)^{\top}P(A_p + B_pK_q)x(t) +
2x(t)^{\top}PB_pK_q(q_x(t) - x(t)),
\end{align*}
we see from \eqref{eq:beta_bound} that
\begin{equation}
\label{eq:second_bound_Vpq}
\dot V_{p,q}(x(t),q_x(t)) \leq
2(\|P(A_p + B_pK_q) \| + \gamma(p,q) ) \|x(t)\|^2
\end{equation}
for all $t \geq 0$ with 
$x([t]^-) \in
\overline{\mathcal{E}}_P(R)$.
Fast sampling and fine quantization make
the upper bound \eqref{eq:second_bound_Vpq} small.

Define $D$ by
\begin{align}
D = 
2\max_{p \not= q} 
(\|P(A_p + B_pK_q) \| + \gamma(p,q)).
\label{eq:D_def}
\end{align}
Then we obtain
\begin{align}
\label{eq:dotVpq_bound}
\dot V_{p,q}(x(t),q_x(t)) \leq D \|x(t)\|^2
\end{align}
for $p,q \in \mathcal{P}$ with $p \not= q$ and
for $t \geq 0$ with
$x(t)\in
\overline{\mathcal{E}}_P(R)~\backslash~\underline{\mathcal{E}}_P(r)$
and $x([t]^-) \in
\overline{\mathcal{E}}_P(R)$.



\subsection{Stability analysis with total mismatch time}
Let us analyze the stability of 
the switched system \eqref{eq:SLS} with \eqref{eq:control_input} by the two
upper bounds \eqref{eq:dotVp_bound} and 
\eqref{eq:dotVpq_bound} of $\dot V$.
Note that
the former bound \eqref{eq:dotVp_bound} 
is for the case $\sigma(t)=\sigma([t]^-)$, while
the latter \eqref{eq:dotVpq_bound} for the case $\sigma(t)\not=\sigma([t]^-)$.
It is therefore useful to define the following characterization of the 
switching signal:
\begin{definition}
For $\tau_1 > \tau_2 \geq 0$, 
we define the total mismatch time $\mu(\tau_1,\tau_2)$ by
\begin{equation}
\label{eq:mu_def_same}
\mu(\tau_1, \tau_2) = \text{{\em the length
of~}} \{\tau \in [\tau_2,\tau_1):~ \sigma(\tau) \not = \sigma([\tau]^-)\}
\end{equation}
\end{definition}
More explicitly, 
the length of an interval means its Lebesgue measure.
We shall not, however, use any measure theory because
$\sigma$ has only finitely many discontinuities
on every interval.

Define $C_P$ and $D_P$ by
\begin{equation*}
C_P = \frac{C}{\lambda_{\max}(P)}, \quad
D_P = \frac{D}{\lambda_{\min}(P)}.
\end{equation*} 
First we study the state behavior when it is outside of
$\underline{\mathcal{E}}_P(r)$.
The following lemma suggests that every trajectory with its initial state in
$\Int(\overline{\mathcal{E}}_P(R))$ goes into $\underline{\mathcal{E}}_P(r)$
if the total mismatch time $\mu$ is sufficiently small; see Fig. \ref{fig:Lemma_Explain}.
\begin{lemma}
\label{lem:tor0}
Let Assumptions \ref{ass:system}, \ref{ass:quantization_near_origin}, and
\ref{ass:subsystem_QA}
hold, and let
$L \geq 0$ satisfy
\begin{equation}
\label{eq:L_inequality1}
L < \frac{C_P}{C_P + D_P}.
\end{equation}
If $\mu(t,0)$ achieves 
\begin{equation}
\label{eq:mu_inequality1}
\mu(t,0) \leq Lt
\end{equation}
for $t > 0$, then 
there exists $T_{r} \geq 0$ such that for every
$x(0) \in \Int(\overline{\mathcal{E}}_P(R))$
and $\sigma(0) \in \mathcal{P}$,
$x(T_{r}) \in \underline{\mathcal{E}}_P(r)$ and
$x (t) \in \Int (\overline{\mathcal{E}}_P(R))$ for all $t \in [0, T_{r}]$.
\end{lemma}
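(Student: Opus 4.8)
The plan is to track the common Lyapunov function $V(x(t))$ itself, converting the two pointwise estimates of $\dot V$ into a single weighted differential inequality. First I would record the elementary conversions between $\|\cdot\|^2$ and $V$: since $\lambda_{\min}(P)\|v\|^2 \le V(v) \le \lambda_{\max}(P)\|v\|^2$, the matched bound \eqref{eq:dotVp_bound} gives $\dot V_p(x(t),q_x(t)) \le -C_P\,V(x(t))$ and the mismatch bound \eqref{eq:dotVpq_bound} gives $\dot V_{p,q}(x(t),q_x(t)) \le D_P\,V(x(t))$, both valid as long as $x(t)\in\overline{\mathcal{E}}_P(R)\setminus\underline{\mathcal{E}}_P(r)$ and $x([t]^-)\in\overline{\mathcal{E}}_P(R)$. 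By \eqref{eq:dotV_def} and \eqref{eq:Vpq_def}, at any $t$ that is neither a switching nor a sampling time one has $\frac{d}{dt}V(x(t))=\dot V_p(x(t),q_x(t))$ when $\sigma(t)=\sigma([t]^-)=p$ and $\frac{d}{dt}V(x(t))=\dot V_{\sigma(t),\sigma([t]^-)}(x(t),q_x(t))$ otherwise. Hence, while the trajectory stays in $\overline{\mathcal{E}}_P(R)\setminus\underline{\mathcal{E}}_P(r)$ (with $x([t]^-)\in\overline{\mathcal{E}}_P(R)$), $\dot V\le -C_P V$ off the mismatch set and $\dot V\le D_P V$ on it.

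Next I would introduce the stopping time
\begin{equation*}
t^{*} := \inf\{\,t\ge 0 \ :\ x(t)\notin\Int(\overline{\mathcal{E}}_P(R)) \ \text{ or }\ x(t)\in\underline{\mathcal{E}}_P(r)\,\},
\end{equation*}
which is $0$ when $x(0)\in\underline{\mathcal{E}}_P(r)$. On $[0,t^{*})$ we have $x(t)\in\Int(\overline{\mathcal{E}}_P(R))\setminus\underline{\mathcal{E}}_P(r)$ and, since $0\le[t]^-\le t<t^{*}$, also $x([t]^-)\in\Int(\overline{\mathcal{E}}_P(R))\subset\overline{\mathcal{E}}_P(R)$; thus both estimates above are in force throughout $[0,t^{*})$. (For the matched estimate, Assumption~\ref{ass:system} makes $\sigma\equiv p$ on $[[t]^-,t]$ whenever $\sigma(t)=\sigma([t]^-)=p$, so $x$ restricted to the current sampling interval coincides with a trajectory of the mode-$p$ sampled-data loop to which Assumption~\ref{ass:subsystem_QA} applies, and the escape clause $x(t)\in\underline{\mathcal{E}}_P(r)$ does not occur on $[0,t^{*})$.) Since $x(\cdot)$ is continuous and $V(x(\cdot))$ is positive and piecewise $C^1$ with finitely many corners on each $[0,t]$, $t<t^{*}$, integrating $\frac{d}{dt}\ln V(x(t))$ over $[0,t]$ and splitting $[0,t)$ into its matched part (of length $t-\mu(t,0)$, where the integrand is $\le -C_P$) and its mismatched part (of length $\mu(t,0)$, where it is $\le D_P$) yields
\begin{equation*}
V(x(t)) \ \le\ V(x(0))\,\exp\!\big(-C_P t + (C_P+D_P)\mu(t,0)\big)\ \le\ V(x(0))\,e^{-\kappa t}, \qquad t\in[0,t^{*}),
\end{equation*}
where $\kappa:=C_P-(C_P+D_P)L>0$ by \eqref{eq:L_inequality1} and we used $\mu(t,0)\le Lt$ from \eqref{eq:mu_inequality1}. (Equivalently, $t\mapsto\exp(C_P t-(C_P+D_P)\mu(t,0))V(x(t))$ is non-increasing on $[0,t^{*})$.)

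From this decay the conclusions follow. Since $x(0)\in\Int(\overline{\mathcal{E}}_P(R))$ forces $V(x(0))<R^2\lambda_{\max}(P)$, we get $V(x(t))<R^2\lambda_{\max}(P)$ on $[0,t^{*})$ and, by continuity, $V(x(t^{*}))\le V(x(0))e^{-\kappa t^{*}}<R^2\lambda_{\max}(P)$ if $t^{*}<\infty$; hence $x$ never reaches $\partial\overline{\mathcal{E}}_P(R)$, so $t^{*}$ (when finite) is precisely the first time $x$ enters $\underline{\mathcal{E}}_P(r)$, and $x(t)\in\Int(\overline{\mathcal{E}}_P(R))$ for all $t\in[0,t^{*}]$. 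Finiteness of $t^{*}$ is clear: $t^{*}=\infty$ would force $V(x(t))\le V(x(0))e^{-\kappa t}\to0$, hence $x(t)\in\underline{\mathcal{E}}_P(r)$ for large $t$, a contradiction. Moreover, since $V(x(t))>r^2\lambda_{\min}(P)$ for $t<t^{*}$, the decay estimate gives the trajectory-independent bound $t^{*}\le\frac{1}{\kappa}\ln\!\big(R^2\lambda_{\max}(P)/(r^2\lambda_{\min}(P))\big)$; one may then take $T_r:=t^{*}$.

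The step I expect to be the main obstacle is the bootstrapping built into $t^{*}$: the $\dot V$-bounds \eqref{eq:dotVp_bound} and \eqref{eq:dotVpq_bound} are licensed only while $x(\cdot)$ and $x([\cdot]^-)$ stay inside $\overline{\mathcal{E}}_P(R)$, yet those bounds are precisely what keeps $x$ inside $\overline{\mathcal{E}}_P(R)$, and the stopping-time/continuity argument is what resolves this circularity. A secondary point needing care is the reduction of the matched pieces of the switched trajectory to Assumption~\ref{ass:subsystem_QA} (via $\sigma\equiv p$ on the current sampling interval, up to a time-shift by a multiple of $T_s$), together with checking that $\mu(t,0)$ is indeed the length of the mismatch set used to split the integral, which relies on $\sigma$ having only finitely many discontinuities.
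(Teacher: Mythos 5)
Your proposal is correct and follows essentially the same route as the paper: both hinge on the estimate $V(x(t)) \le \exp\big(D_P\,\mu(t,0) - C_P(t-\mu(t,0))\big)V(x(0))$ obtained by splitting $[0,t)$ into matched and mismatched parts, then use $\mu(t,0)\le Lt$ with \eqref{eq:L_inequality1} to force decay, rule out reaching $\partial\overline{\mathcal{E}}_P(R)$, and conclude entry into $\underline{\mathcal{E}}_P(r)$ since otherwise $V\to 0$. Your stopping-time packaging of the bootstrapping (and the explicit uniform bound on $T_r$) is a slightly cleaner organization of the same contradiction argument the paper gives.
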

\begin{proof}
First we show that the trajectory $x$ does not leave
$\Int (\overline{\mathcal{E}}_P(R))$ without belonging 
to $\underline{\mathcal{E}}_P(r)$. That is, 
there does not exist $T_{R} > 0$
such that 
\begin{gather}
\label{eq:x(TR0)_outside}
x(T_{R}) \in \partial \overline{\mathcal{E}}_P(R) \\
\label{eq:x(t)_0_TR0}
x(t) \in 
\Int (\overline{\mathcal{E}}_P(R))~\backslash~\underline{\mathcal{E}}_P(r)
\qquad (0 \leq t < T_{R}).
\end{gather}

Assume, to reach a contradiction, \eqref{eq:x(TR0)_outside} and
\eqref{eq:x(t)_0_TR0} hold for some $T_{R} > 0$.
Recall that
\begin{equation*}
\lambda_{\min}(P) \|x\|^2 \leq V(x) = x^{\top} Px \leq 
\lambda_{\max}(P) \|x\|^2 
\end{equation*}
for $x \in \mathbb{R}^{\sf n}$.
It follows from
\eqref{eq:dotVp_bound} and \eqref{eq:dotVpq_bound} that
\begin{align}
\label{eq:VpVpq_bound_by_V}
\begin{array}{c}
\dot V_p(x(t),q_x(t)) \leq -C_PV(x(t)) \\[4pt]
\dot V_{p,q}(x(t),q_x(t)) \leq D_P V(x(t)).
\end{array}
\end{align}

By \eqref{eq:VpVpq_bound_by_V}, a successive calculation at each switching time
shows that
\begin{align}
\label{eq:VTR0_bound_by_V0}
V(x(T_{R})) 
\leq 
\exp \big(D_P \mu(T_{R},0)-C_P(T_{R} - \mu(T_{R},0))\big) 
V(x(0)).
\end{align}
Since \eqref{eq:mu_inequality1} gives
\begin{align}
D_P \mu(t,0)-C_P(t - \mu(t,0))
\leq
\left(
\left(
C_P + 
D_P
\right) L - 
C_P
\right)t
\label{eq:CD_bound0}
\end{align}
for $t > 0$,
it follows from \eqref{eq:L_inequality1} and 
$x(0) \in \Int(\overline{\mathcal{E}}_P(R))$ that
\begin{align*}
V(x(T_{R})) < V(x(0)) < R^2 \lambda_{\max} (P).
\end{align*}
However, \eqref{eq:x(TR0)_outside} shows that
$
V(x(T_{R})) 
= R^2 \lambda_{\max} (P),
$
and we have a contradiction.

Let us next prove that
$x(T_{r}) \in \underline{\mathcal{E}}_P(r)$ for some
$T_{r} \geq 0$.

Suppose $x(t) \not\in \underline{\mathcal{E}}_P(r)$ for 
all $t \geq 0$. Then since the discussion above shows that 
$x(t) \in \Int (
\overline{\mathcal{E}}_P(R))~\backslash~\underline{\mathcal{E}}_P(r)$
for $t \geq 0$, we obtain \eqref{eq:VTR0_bound_by_V0} with arbitrary
$t \geq 0$ in place of $T_{R}$.
Hence \eqref{eq:L_inequality1} and \eqref{eq:CD_bound0} show that
$V(x(t)) \to 0$ as $t \to \infty$.
However this contradicts
$x(t) \not\in \underline{\mathcal{E}}_P(r)$, i.e.,
$V(x(t)) > r^2 \lambda_{\min}(P)$. Thus there exists 
$T_{r} \geq 0$ such that 
$x(T_{r}) \in \underline{\mathcal{E}}_P(r)$.
\end{proof}

From the next result, we see that
the trajectory leaves $\underline{\mathcal{E}}_P(r)$ only if
a switch occurs between sampling times.
\begin{lemma}
\label{lem:cross_to_out}
Let Assumptions \ref{ass:system}, \ref{ass:quantization_near_origin}, and
\ref{ass:subsystem_QA} hold.
Let the trajectory $x(t)$ leave
$\underline{\mathcal{E}}_P(r)$ when $t = T_0$.
More precisely, 
there exists $\delta > 0$ such that
\begin{equation}
\label{eq:x(T_0)_def}
x(T_0) \in \partial \underline{\mathcal{E}}_P(r),~
x(T_0+\varepsilon) \not\in \underline{\mathcal{E}}_P(r)
~~~~ (0 <  \varepsilon < \delta).
\end{equation}
Then $\sigma(T_0) \not= \sigma([T_0]^-)$.
\end{lemma}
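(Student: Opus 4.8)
The plan is to argue by contradiction: suppose $\sigma(T_0) = \sigma([T_0]^-)$, so at time $T_0$ the mode used by the controller matches the active plant mode. Write $p = \sigma(T_0) = \sigma([T_0]^-)$. I would first shrink $\delta$ if necessary so that, by Assumption~\ref{ass:system} (at most one switch per sampling interval) and the fact that $\sigma$ is piecewise constant and right-continuous, there is no switching time in the interval $[T_0, T_0+\delta)$ \emph{and} no sampling time in $(T_0, T_0+\delta)$ other than possibly $T_0$ itself. Then for $t \in [T_0, T_0+\delta)$ we have $\sigma(t) = \sigma([t]^-) = p$, so the dynamics on this interval are exactly those of the non-switched subsystem $(A_p, B_pK_p)$ with quantizer, namely system~\eqref{eq:subsystem_QSDS}.

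Next I would invoke Assumption~\ref{ass:subsystem_QA} for this subsystem. Since $x(T_0) \in \partial\underline{\mathcal{E}}_P(r) \subset \overline{\mathcal{E}}_P(R)$, the trajectory starting at $x(T_0)$ satisfies, for each $t$ in the interval, either $\dot V_p(x(t),q_x(t)) \leq -C\|x(t)\|^2$ or $x(t) \in \underline{\mathcal{E}}_P(r)$. The key consequence I want is the invariance statement noted right after Assumption~\ref{ass:subsystem_QA}: $\underline{\mathcal{E}}_P(r)$ is an invariant set for~\eqref{eq:subsystem_QSDS}. Concretely, because $x(T_0) \in \underline{\mathcal{E}}_P(r)$, i.e.\ $V(x(T_0)) \leq r^2\lambda_{\min}(P)$, and because whenever the trajectory is on $\partial\underline{\mathcal{E}}_P(r)$ the alternative $\dot V_p \leq -C\|x\|^2 \leq 0$ forces $V$ to be non-increasing there, the trajectory cannot cross outward: for all $t \in [T_0, T_0+\delta)$ we get $V(x(t)) \leq r^2\lambda_{\min}(P)$, hence $x(t) \in \underline{\mathcal{E}}_P(r)$. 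This directly contradicts~\eqref{eq:x(T_0)_def}, which asserts $x(T_0+\varepsilon) \notin \underline{\mathcal{E}}_P(r)$ for $0 < \varepsilon < \delta$. Therefore $\sigma(T_0) \neq \sigma([T_0]^-)$.

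The main obstacle is making the invariance argument rigorous from the hypothesis as stated, since Assumption~\ref{ass:subsystem_QA} only gives a pointwise dichotomy ($\dot V_p \leq -C\|x\|^2$ \emph{or} $x(t) \in \underline{\mathcal{E}}_P(r)$) rather than an outright invariance claim; one must rule out a trajectory that sits on $\partial\underline{\mathcal{E}}_P(r)$ and then escapes. The cleanest way is a standard boundary argument: let $t^* = \inf\{t > T_0 : x(t) \notin \underline{\mathcal{E}}_P(r)\}$; by continuity of $x$ and closedness of $\underline{\mathcal{E}}_P(r)$ we have $x(t^*) \in \partial\underline{\mathcal{E}}_P(r)$ with $t^* \ge T_0$, and on a right-neighborhood of $t^*$ the trajectory leaves the ellipsoid, so $V(x(t)) > r^2\lambda_{\min}(P)$ there. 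But at $t^*$ the point is on the boundary, not in the interior, so the dichotomy in Assumption~\ref{ass:subsystem_QA} forces $\dot V_p(x(t^*), q_x(t^*)) \leq -C\|x(t^*)\|^2 < 0$ (note $x(t^*) \neq 0$ since $0 \in \Int\underline{\mathcal{E}}_P(r)$), contradicting that $V(x(t))$ increases immediately after $t^*$. One subtlety to handle carefully: $T_0$ might itself be a sampling time, so $q_x$ could jump at $T_0$; but on the open interval $(T_0, T_0+\delta)$ with no sampling point, $q_x$ is constant and the argument above applies on $(T_0, t^*]$, which is all that is needed.
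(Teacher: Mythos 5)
Your overall strategy coincides with the paper's: the paper's entire proof is the one-line remark that $\underline{\mathcal{E}}_P(r)$ is invariant when no mode mismatch occurs, and you are (rightly) trying to supply the details of that invariance. The setup is fine --- shrinking $\delta$ to exclude further switches and sampling times, and identifying the dynamics on $[T_0,T_0+\delta)$ with the matched subsystem $(A_p,B_pK_p)$. But the step you yourself single out as the crux fails as written. You claim that because $x(t^*)$ lies on $\partial\underline{\mathcal{E}}_P(r)$ ``not in the interior,'' the dichotomy of Assumption~\ref{ass:subsystem_QA} forces $\dot V_p(x(t^*),q_x(t^*))\le -C\|x(t^*)\|^2$. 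It does not: $\underline{\mathcal{E}}_P(r)=\{x:V(x)\le r^2\lambda_{\min}(P)\}$ is defined with a non-strict inequality, hence closed, so $\partial\underline{\mathcal{E}}_P(r)\subset\underline{\mathcal{E}}_P(r)$ and the second alternative of the dichotomy is already satisfied at $t^*$; the assumption imposes no decrease condition at boundary points. (In addition, under \eqref{eq:x(T_0)_def} your $t^*$ equals $T_0$, which may be a switching or sampling time where $\dot V$ is not even defined, so arguing from the sign of the derivative at $t^*$ itself is doubly problematic.)

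The repair is short and stays inside your framework: invoke the derivative bound on the open interval where the state is \emph{strictly outside} the ellipsoid, not at the boundary point. By \eqref{eq:x(T_0)_def}, $x(t)\notin\underline{\mathcal{E}}_P(r)$ for every $t\in(T_0,T_0+\delta)$, so the dichotomy gives $\tfrac{d}{dt}V(x(t))=\dot V_p(x(t),q_x(t))\le -C\|x(t)\|^2\le 0$ at every non-sampling, non-switching time in that interval; since $V(x(\cdot))$ is continuous and piecewise differentiable, it is nonincreasing on $(T_0,T_0+\delta)$, and continuity at $T_0$ then yields $V(x(t))\le V(x(T_0))=r^2\lambda_{\min}(P)$, i.e.\ $x(t)\in\underline{\mathcal{E}}_P(r)$, contradicting \eqref{eq:x(T_0)_def}. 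One further point worth a sentence in a careful write-up (the paper glosses over it too): Assumption~\ref{ass:subsystem_QA} is literally stated for trajectories of the non-switched system \eqref{eq:subsystem_QSDS} initialized in $\overline{\mathcal{E}}_P(R)$, whereas here it is applied to a segment of the switched trajectory whose quantized input is anchored at the sampling time $[T_0]^-<T_0$; one should say explicitly that the assumption is being read as a pointwise condition on the pairs $(x(t),q_x(t))$ arising under matched modes.
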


\begin{proof}
This immediately follows from the fact that 
$\underline{\mathcal{E}}_P(r)$ is an invariant set
if the mode mismatch does not happen.
\end{proof}

Lemma \ref{lem:r_to_ar} below shows that if 
the trajectory enters into $\underline{\mathcal{E}}_P(r)$,
it keeps roaming a slightly larger ellipsoid than 
$\underline{\mathcal{E}}_P(r)$; see Fig. \ref{fig:Lemma_Explain}.  
\begin{lemma}
\label{lem:r_to_ar}
Let Assumptions \ref{ass:system}, \ref{ass:quantization_near_origin}, and
\ref{ass:subsystem_QA} hold.
Suppose that $T_0 \geq 0$ is the first time at which
$x(t)$ leaves $\underline{\mathcal{E}}_P(r)$.
Let $a > 1$ satisfy 
\begin{equation}
\label{eq:a_cond}
a^2r^2 \lambda_{\min}(P)
< R^2\lambda_{\max}(P)
\end{equation}
and define $b(a)$ by
\begin{equation}
\label{eq:ba_bound}
b(a) =
\frac{2\log a}{C_P + D_P}.
\end{equation}
Pick $L \geq 0$ with \eqref{eq:L_inequality1} and
suppose that $\mu(t,T_0)$ satisfies
\begin{equation}
\label{eq:mu_b_L_condition}
\mu(t,T_0) \leq b(a) + L(t-T_0)
\end{equation}
for all $t > T_0$. Then
there exists $T_{1} > T_0$ such that 
for every $\sigma(T_0) \in \mathcal{P}$,
$x(T_1) \in \underline{\mathcal{E}}_P(r)$ and
$x (t) \in \Int(\underline{\mathcal{E}}_P(ar))$ 
for $t \in [T_0, T_1]$.
\end{lemma}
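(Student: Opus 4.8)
The plan is to mimic the structure of Lemma~\ref{lem:tor0}, but now working inside the larger ellipsoid $\underline{\mathcal{E}}_P(ar)$ and using the two differential inequalities \eqref{eq:VpVpq_bound_by_V} to control $V(x(t))$ along the trajectory starting from $x(T_0)\in\partial\underline{\mathcal{E}}_P(r)$. First I would record the obvious inclusion $\underline{\mathcal{E}}_P(ar)\subset\Int(\overline{\mathcal{E}}_P(R))$, which is exactly what condition \eqref{eq:a_cond} guarantees; this ensures that whenever $x(t)\in\overline{\mathcal{E}}_P(ar)$ we have $x([t]^-)\in\overline{\mathcal{E}}_P(R)$ (using Lemma~\ref{lem:alpha1_bound} to bound the sampled state, together with a margin argument or the invariance of $\overline{\mathcal{E}}_P(R)$), so that both bounds \eqref{eq:dotVp_bound} and \eqref{eq:dotVpq_bound}, hence \eqref{eq:VpVpq_bound_by_V}, are valid there.

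Next, exactly as in the derivation of \eqref{eq:VTR0_bound_by_V0}, a successive calculation across the (finitely many) switching times gives
\begin{equation*}
V(x(t)) \leq \exp\big(D_P\,\mu(t,T_0) - C_P(t-T_0-\mu(t,T_0))\big)\,V(x(T_0))
\end{equation*}
for as long as $x$ stays in $\overline{\mathcal{E}}_P(R)\backslash\underline{\mathcal{E}}_P(r)$. Feeding in hypothesis \eqref{eq:mu_b_L_condition} and the definition \eqref{eq:ba_bound} of $b(a)$, the exponent is bounded above by $2\log a + ((C_P+D_P)L - C_P)(t-T_0)$, and since $L$ satisfies \eqref{eq:L_inequality1} the coefficient of $(t-T_0)$ is strictly negative. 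Because $V(x(T_0)) = r^2\lambda_{\min}(P)$, this yields $V(x(t)) \leq a^2 r^2\lambda_{\min}(P)$ for all such $t$, i.e. $x(t)\in\underline{\mathcal{E}}_P(ar)$; moreover the strict negativity forces $V(x(t))\to 0$, which is incompatible with $x(t)\notin\underline{\mathcal{E}}_P(r)$ for all $t>T_0$. Hence there is a first return time $T_1 > T_0$ with $x(T_1)\in\underline{\mathcal{E}}_P(r)$, and on $[T_0,T_1]$ we have $x(t)\in\underline{\mathcal{E}}_P(ar)$ — with a little care (e.g. noting $b(a)$ is achieved only in the limit, or arguing as in Lemma~\ref{lem:tor0} that $\partial\underline{\mathcal{E}}_P(ar)$ is never reached) one gets the strict containment $x(t)\in\Int(\underline{\mathcal{E}}_P(ar))$ claimed.

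The main obstacle I anticipate is the bookkeeping needed to justify that the two $\dot V$-bounds remain applicable throughout $[T_0,T_1]$. The bound \eqref{eq:dotVpq_bound} was stated only for $x(t)\in\overline{\mathcal{E}}_P(R)\backslash\underline{\mathcal{E}}_P(r)$ with $x([t]^-)\in\overline{\mathcal{E}}_P(R)$, so I must simultaneously track that the continuous state stays in $\underline{\mathcal{E}}_P(ar)$ (hence in $\overline{\mathcal{E}}_P(R)$) and that the sampled state $x([t]^-)$ also stays in $\overline{\mathcal{E}}_P(R)$ — this is where invariance of $\overline{\mathcal{E}}_P(R)$ (a consequence of Assumption~\ref{ass:subsystem_QA} together with the already-proven non-escape argument of Lemma~\ref{lem:tor0}) must be invoked. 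A second, minor subtlety is handling the segment of $[T_0,T_1]$ where $x$ dips back into $\underline{\mathcal{E}}_P(r)$ before the first time it is genuinely trapped: here one partitions $[T_0,T_1]$ at the times of entry into and exit from $\underline{\mathcal{E}}_P(r)$, applying the monotone Lyapunov estimate on the pieces outside and invariance-type estimates on the pieces inside, exactly the kind of interval-splitting used implicitly in Lemma~\ref{lem:tor0}. Neither difficulty is deep, but the estimate $b(a)$ is tight, so the inequalities must be tracked with care rather than slack.
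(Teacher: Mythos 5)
Your proposal is correct and follows essentially the same route as the paper: the same exponential Lyapunov estimate in terms of $\mu(t,T_0)$, with $b(a)$ chosen so that the extra factor $\exp((C_P+D_P)b(a))$ equals $a^2$, giving both the decay that forces a return to $\underline{\mathcal{E}}_P(r)$ and the bound $V(x(t))< a^2r^2\lambda_{\min}(P)$ that keeps the trajectory in $\Int(\underline{\mathcal{E}}_P(ar))$. The bookkeeping issues you flag are real but are handled in the paper exactly as you suggest, by the inclusion $\underline{\mathcal{E}}_P(ar)\subset\Int(\overline{\mathcal{E}}_P(R))$ and the non-escape argument of Lemma~\ref{lem:tor0}.
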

\begin{proof}
By \eqref{eq:mu_b_L_condition},
as long as $x(t)\in
\overline{\mathcal{E}}_P(R)~\backslash~\underline{\mathcal{E}}_P(r)$,
$V(x(t))$ satisfies
\begin{align}
V(x(t)) \leq 
\exp\big( 
\left(
\left(
C_P+ D_P \right)L - 
C_P
\right) (t - T_0)
\big) 
\exp\big(
\left(C_P  + D_P\right)b(a)
\big) V(x(T_0) ) \label{eq:Vt_VT0}
\end{align}
for $t \geq T_0$.
On the other hand,
since $x(T_0) \in \partial \underline{\mathcal{E}}_P(r) $,
it follows from \eqref{eq:ba_bound} that
\begin{align}
\label{eq:Lyapunov_bound_ar0}
\exp\big(
\left(C_P + D_P\right)b(a)
\big) V(x(T_0) ) 
\leq a^2 \lambda_{\min}(P).
\end{align}
Since \eqref{eq:a_cond} holds if and only if
$
\underline{\mathcal{E}}_P(ar) 
\subset \Int (\overline{\mathcal{E}}_P(R)),
$
as in the proof of Lemma \ref{lem:tor0}, \eqref{eq:Vt_VT0} leads to
$x(T_1) \in \underline{\mathcal{E}}_P(r)$ for some
$T_1 > T_0$.
Substituting \eqref{eq:Lyapunov_bound_ar0} into \eqref{eq:Vt_VT0},
we also obtain 
$V(x(t)) < a^2\lambda_{\min}(P)$ for $t \geq T_0$. 
Thus
$x(t) \in \Int (\underline{\mathcal{E}}_P(ar))$ 
for $t \in [T_0, T_1]$.
\end{proof}

 \begin{figure}[t]
 \centering
 \includegraphics[width = 7cm, clip]{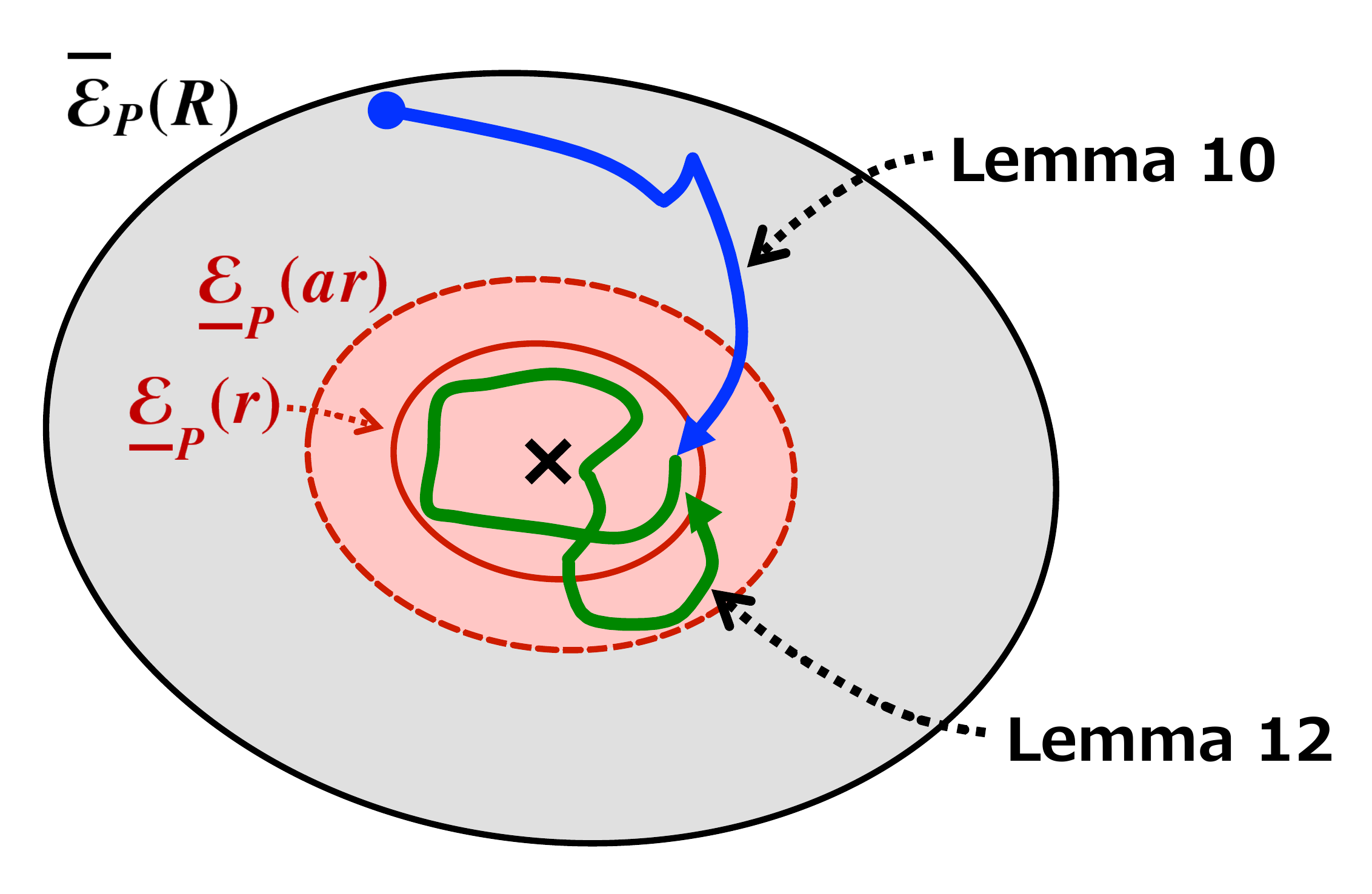}
 \caption{Behavior of trajectory}
 \label{fig:Lemma_Explain}
 \end{figure}

Referring to 
Lemmas \ref{lem:tor0}, \ref{lem:cross_to_out}, and \ref{lem:r_to_ar},
we immediately derive the following result:
\begin{theorem}
\label{thm:main_thm}
Let Assumptions \ref{ass:system}, \ref{ass:quantization_near_origin}, and
\ref{ass:subsystem_QA} hold.
Let $L$, $a$, and $b(a)$ be as in Lemmas \ref{lem:tor0} and \ref{lem:r_to_ar}.
Suppose that $\mu$ in \eqref{eq:mu_def_same} satisfies
\eqref{eq:mu_inequality1} for $t > 0$ and 
\eqref{eq:mu_b_L_condition} for $t > T_0$ with 
$\sigma(T_0) \not= \sigma([T_0]^-)$.

If $x(0) \in \Int(\overline{\mathcal{E}}_P(R))$, then
every trajectory $x$ of the switched system \eqref{eq:SLS} 
with \eqref{eq:control_input} satisfies
$x(t) \in \Int(\overline{\mathcal{E}}_P(R))$ for $t \geq 0$, and furthermore
there exists $T_{r} \geq 0$ such that 
$x(t) \in \Int(\underline{\mathcal{E}}_P(ar))$ 
for $t \geq T_{r}$.
\end{theorem}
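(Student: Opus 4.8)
The plan is to stitch Lemmas~\ref{lem:tor0},~\ref{lem:cross_to_out}, and~\ref{lem:r_to_ar} together along the time axis, so that the trajectory is first driven into $\underline{\mathcal{E}}_P(r)$ and then kept inside the slightly larger ellipsoid $\underline{\mathcal{E}}_P(ar)$ for all later times, while never leaving $\overline{\mathcal{E}}_P(R)$.

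First I would invoke Lemma~\ref{lem:tor0}. Since $x(0)\in\Int(\overline{\mathcal{E}}_P(R))$ and $\mu(t,0)\le Lt$ for $t>0$ by hypothesis, there is a time $T_r\ge 0$ with $x(T_r)\in\underline{\mathcal{E}}_P(r)$ and $x(t)\in\Int(\overline{\mathcal{E}}_P(R))$ throughout $[0,T_r]$. This already settles the claim on $[0,T_r]$ and leaves only the behaviour after $T_r$ to be analysed.

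For $t\ge T_r$ the argument is a concatenation. Consider the set $\mathcal{O}=\{t\ge T_r:\ x(t)\notin\underline{\mathcal{E}}_P(r)\}$, which is relatively open by continuity of $x$. If $\mathcal{O}$ is empty we are done at once, since $a>1$ gives $\underline{\mathcal{E}}_P(r)\subset\Int(\underline{\mathcal{E}}_P(ar))$. Otherwise I would decompose $\mathcal{O}$ into its maximal open components and, at each left endpoint $\tau$ of such a component, note that $x(\tau)\in\partial\underline{\mathcal{E}}_P(r)$, so Lemma~\ref{lem:cross_to_out} forces $\sigma(\tau)\neq\sigma([\tau]^-)$. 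The standing hypothesis then supplies $\mu(t,\tau)\le b(a)+L(t-\tau)$ for all $t>\tau$, which is precisely what Lemma~\ref{lem:r_to_ar} needs with $T_0$ replaced by $\tau$ (its proof uses only $x(T_0)\in\partial\underline{\mathcal{E}}_P(r)$ and this bound, not that $T_0$ is literally the first exit time). Lemma~\ref{lem:r_to_ar} then returns a time $T_1>\tau$ with $x(T_1)\in\underline{\mathcal{E}}_P(r)$ and $x(t)\in\Int(\underline{\mathcal{E}}_P(ar))$ for $t\in[\tau,T_1]$; hence the whole component of $\mathcal{O}$ starting at $\tau$ is contained in $[\tau,T_1]$ and the trajectory stays in $\Int(\underline{\mathcal{E}}_P(ar))$ on it. Splicing these intervals together with the complementary times, where $x(t)\in\underline{\mathcal{E}}_P(r)\subset\Int(\underline{\mathcal{E}}_P(ar))$, yields $x(t)\in\Int(\underline{\mathcal{E}}_P(ar))$ for all $t\ge T_r$; combined with $\underline{\mathcal{E}}_P(ar)\subset\Int(\overline{\mathcal{E}}_P(R))$ (the equivalent form of \eqref{eq:a_cond}) this also extends $x(t)\in\Int(\overline{\mathcal{E}}_P(R))$ to all $t\ge 0$.

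The main obstacle I anticipate is not a new estimate but the bookkeeping in the last step: one must ensure the exit times $\tau$ do not accumulate, so that the concatenation is a genuine finite recursion on every bounded horizon. This is exactly where Assumption~\ref{ass:system} is used — a fresh departure from $\underline{\mathcal{E}}_P(r)$ requires a mode mismatch, hence a switch strictly inside a sampling interval, and each sampling interval carries at most one switch, so only finitely many exit–return cycles can occur on any finite interval. I would also state explicitly that Lemma~\ref{lem:r_to_ar} is being re-applied at each such $\tau$, not merely at the first one; since its proof localises to $[\tau,\infty)$ and depends on $\tau$ only through $x(\tau)\in\partial\underline{\mathcal{E}}_P(r)$ and \eqref{eq:mu_b_L_condition}, this repeated use is legitimate.
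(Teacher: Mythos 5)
Your argument is correct and follows exactly the route the paper intends: the paper states that Theorem~\ref{thm:main_thm} follows ``immediately'' from Lemmas~\ref{lem:tor0}, \ref{lem:cross_to_out}, and \ref{lem:r_to_ar}, and your concatenation of exit--return cycles is precisely that combination, with the omitted bookkeeping (re-applicability of Lemma~\ref{lem:r_to_ar} at every exit time, non-accumulation of exits via Assumption~\ref{ass:system}) filled in correctly.
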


\begin{remark}
In this section, we have studied the stability analysis of the switched system
by using the total mismatch time of the modes 
between the plant and the feedback gain.
If the mismatch {\em does} occur, 
the closed-loop system may be {\em unstable},
If {\em not}, it is {\em stable}.
Our proposed method is therefore similar to that in \cite{Zhai2001}, where
the stability analysis of switched systems with stable and unstable subsystems 
is discussed with the aid of 
the total activation time ratio between stable subsystems and unstable ones.
In \cite{Zhai2001}, the average dwell time
introduced by \cite{Hespanha1999CDC}
is also required to be sufficiently large.
However, such a requirement is not necessary here because we use a common Lyapunov function.
\end{remark}

\section{Reduction to a Dwell-Time Condition}
In the preceding section, we have derived a sufficient condition 
on the total mismatch time $\mu$ for the stabilization of
the switched system with limited information.
However it may be difficult to check whether $\mu$ satisfies
\eqref{eq:mu_inequality1} and \eqref{eq:mu_b_L_condition}.
In this section, we will briefly show that these conditions \eqref{eq:mu_inequality1} and \eqref{eq:mu_b_L_condition}
can be achieved for
switching signals with a certain dwell time property.


To proceed, we recall the definition of dwell time.
If the switching signal $\sigma$ has an interval between
consecutive discontinuities no smaller than $T_d > 0$, and further if $\sigma$ 
has no discontinuities in $[0, T_d)$, then we call $\sigma$ 
a \textit{switching signal with dwell time $T_d$}.


\begin{proposition}
\label{thm:mu_upper_bound}
Fix $n \in \mathbb{N}$.
For every $\sigma$ with dwell time $nT_s$, $\mu$ in \eqref{eq:mu_def_same}
satisfies
\begin{equation}
\label{eq:mu_0T} 
\mu(t,0) < \frac{1}{n}t \qquad (t > 0).
\end{equation}
Furthermore, if $\sigma(T_0) \not= \sigma([T_0]^-)$, then 
\begin{equation} 
\label{eq:mu_T0T}
\mu(t,T_0) < T_s + \frac{1}{n}(t-T_0) 
\qquad (t > T_0).
\end{equation}
\end{proposition}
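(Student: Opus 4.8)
The plan is to exploit the fact that a dwell time of $nT_s\geq T_s$ forces each sampling interval $[kT_s,(k+1)T_s)$ to contain at most one switching time, so that the mismatch set decomposes into a disjoint union of short intervals, one per switch. First I would record the following structural fact. A switch $t^{*}$ with $t^{*}=[t^{*}]^-$ contributes nothing to $\{\tau:\sigma(\tau)\neq\sigma([\tau]^-)\}$, since re-sampling occurs at $t^{*}$; while a switch with $[t^{*}]^-<t^{*}$ has its previous switch in an earlier sampling interval and its next switch at time $>[t^{*}]^-+T_s$, so that $\sigma(\tau)=\sigma([\tau]^-)$ on $[[t^{*}]^-,t^{*})$ and $\sigma(\tau)\neq\sigma([\tau]^-)$ exactly on $[t^{*},[t^{*}]^-+T_s)$, after which re-sampling at $[t^{*}]^-+T_s$ restores the match. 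These intervals, over distinct switches, are pairwise disjoint. Hence for any window $[\tau_2,\tau_1)$, $\mu(\tau_1,\tau_2)$ equals the sum over switching times $t^{*}$ of $|[t^{*},[t^{*}]^-+T_s)\cap[\tau_2,\tau_1)|$, each summand being strictly less than $T_s$.

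For \eqref{eq:mu_0T}, let $t_1<\dots<t_N$ be the switching times in $[0,t)$; the dwell-time hypothesis gives $t_1\geq nT_s$ and $t_{i+1}\geq t_i+nT_s$. Writing $k_i:=[t_i]^-/T_s\in\mathbb{Z}_+$ and using integrality of $n$, one gets $k_1\geq n$ and $k_{i+1}\geq k_i+n$, hence $[t_N]^-=k_NT_s\geq NnT_s$. Since $[t_N]^-\leq t_N<t$, this yields $N<t/(nT_s)$, so $NT_s<t/n$; as each of the $N$ switches contributes less than $T_s$ to $\mu(t,0)$, we conclude $\mu(t,0)\leq NT_s<t/n$ (with $N=0$ trivial).

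For \eqref{eq:mu_T0T}, the hypothesis $\sigma(T_0)\neq\sigma([T_0]^-)$ forces a unique switching time $\tau_0\in([T_0]^-,T_0]$, strictly interior to the sampling interval $[[T_0]^-,[T_0]^-+T_s)$, and earlier switches contribute nothing to $\mu(t,T_0)$. Put $\varepsilon:=T_0-[T_0]^-\in[0,T_s)$; then $\tau_0$ contributes $|[T_0,[T_0]^-+T_s)\cap[T_0,t)|\leq T_s-\varepsilon$. Letting $\tau_1<\tau_2<\dots$ be the switches after $\tau_0$, we have $\tau_i\geq\tau_0+inT_s>[T_0]^-+inT_s$, so (again by integrality) $[\tau_i]^-\geq[T_0]^-+inT_s$; if $M$ of them lie before $t$, then $[T_0]^-+MnT_s\leq[\tau_M]^-<t$, so $MT_s<(t-[T_0]^-)/n$, while each contributes at most $T_s$. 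Hence
\[
\mu(t,T_0)\leq(T_s-\varepsilon)+MT_s<(T_s-\varepsilon)+\frac{t-[T_0]^-}{n}=T_s+\frac{t-T_0}{n}-\varepsilon\,\frac{n-1}{n}\leq T_s+\frac{t-T_0}{n}.
\]

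I expect the main obstacle to be the bookkeeping in this last display: one must notice that the excess $\varepsilon/n$ produced by rewriting $t-[T_0]^-$ as $(t-T_0)+\varepsilon$ is exactly cancelled by the deficit $-\varepsilon$ coming from counting only the portion of the first mismatch interval lying after $T_0$. The remaining work — deriving the structural decomposition from the dwell-time bound, keeping the half-open intervals straight, and handling the $n=1$ and $M=0$ corner cases — is routine.
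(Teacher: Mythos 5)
Your argument is correct, and while the overall strategy coincides with the paper's (decompose the mismatch set into one interval $[t^{*},[t^{*}]^-+T_s)$ per switch, each of length less than $T_s$, and count switches via the dwell time), your proof of the second inequality \eqref{eq:mu_T0T} takes a genuinely different and noticeably cleaner route. The paper counts the $m$ switches in $(T_0,t)$ against the elapsed time $t-T_0$; since the $k$-th switch after $t_0$ can occur as early as $t_0+knT_s$, which may precede $T_0+knT_s$, the bound $t-T_0\geq mnT_s$ can fail, and the paper is forced into a two-case analysis (its conditions \eqref{eq:case1} and \eqref{eq:case2}) with an intricate computation involving the slacks $\xi_k=(t_{k+1}-t_k)-nT_s$ in the second case. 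You avoid this entirely by anchoring the count at the sampling time $[T_0]^-$: since $\tau_M>[T_0]^-+MnT_s$ and $[T_0]^-+MnT_s$ is a multiple of $T_s$, you get the uniform bound $MnT_s<t-[T_0]^-$ with no case split, and the resulting excess $\varepsilon/n$ from rewriting $t-[T_0]^-$ as $(t-T_0)+\varepsilon$ is absorbed by the deficit $-\varepsilon$ in the first mismatch interval's contribution $T_s-\varepsilon$ — exactly the cancellation you flag. What the paper's approach buys is a decomposition that makes the near-sharpness of the bound (Proposition \ref{thm:mu_lower_bound}) transparent from the same bookkeeping; what yours buys is a shorter, case-free proof. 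Your handling of the corner cases ($t^{*}=[t^{*}]^-$, $M=0$, $n=1$) and of the disjointness of the mismatch intervals under dwell time $nT_s\geq T_s$ is sound.
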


\begin{proof}
The proof includes a lengthy but routine calculation;
see Appendix A.1.
\end{proof}

Theorem \ref{thm:main_thm} and 
Proposition \ref{thm:mu_upper_bound} can be 
combined in the following way:
\begin{theorem}
\label{thm:dwell_sample_switching}
Let Assumptions \ref{ass:system}, \ref{ass:quantization_near_origin}, and
\ref{ass:subsystem_QA} hold.
Let $n \in \mathbb{N}$ satisfy $n \geq 1+D_P/C_P$.
Define 
\begin{equation*}
a = \exp 
\left(
\frac{T_s(C_P + D_P)}{2}
\right),
\end{equation*}
and suppose that $a$ satisfies \eqref{eq:a_cond}.
If $x(0) \in \Int(\overline{\mathcal{E}}_P(R))$
and if the dwell time of $\sigma$ is $nT_s$, then
every trajectory $x(t)$ of the switched system \eqref{eq:SLS} 
with \eqref{eq:control_input} satisfies
$x(t) \in \Int(\overline{\mathcal{E}}_P(R))$ for $t \geq 0$, and furthermore
there exists $T_{r} \geq 0$ such that 
$x(t) \in \Int(\underline{\mathcal{E}}_P(ar))$ for $t \geq T_{r}$.
\end{theorem}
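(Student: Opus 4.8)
The plan is to verify that the dwell-time hypothesis forces the total mismatch time $\mu$ to satisfy precisely the two bounds \eqref{eq:mu_inequality1} and \eqref{eq:mu_b_L_condition} demanded by Theorem~\ref{thm:main_thm}, for the stated choices of $L$ and $a$, and then to quote Theorem~\ref{thm:main_thm}. First I would set $L := 1/n$. Since $n \geq 1 + D_P/C_P = (C_P + D_P)/C_P$ and $C_P, C_P + D_P > 0$, this gives $L = 1/n \leq C_P/(C_P + D_P)$, so $L$ meets the requirement \eqref{eq:L_inequality1} of Lemma~\ref{lem:tor0} (with strict inequality once $n > 1 + D_P/C_P$).

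The only genuine choice is the value of $a$, which is reverse-engineered so that the function $b(a)$ of \eqref{eq:ba_bound} equals the sampling period $T_s$. Indeed, with $a = \exp\big(T_s(C_P + D_P)/2\big)$ one has $\log a = T_s(C_P + D_P)/2$, hence $b(a) = 2\log a/(C_P + D_P) = T_s$. Moreover $a > 1$ because $T_s > 0$ and $C_P + D_P > 0$, and \eqref{eq:a_cond} holds by hypothesis, so $a$ fulfils all the conditions imposed on it in Lemma~\ref{lem:r_to_ar}.

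Next I would invoke Proposition~\ref{thm:mu_upper_bound} with this $n$. Since $\sigma$ has dwell time $nT_s$, \eqref{eq:mu_0T} gives $\mu(t,0) < t/n = Lt$ for $t > 0$, which is \eqref{eq:mu_inequality1}; and whenever $\sigma(T_0) \neq \sigma([T_0]^-)$, \eqref{eq:mu_T0T} gives $\mu(t,T_0) < T_s + (t - T_0)/n = b(a) + L(t - T_0)$ for $t > T_0$, which is \eqref{eq:mu_b_L_condition}. With $x(0) \in \Int(\overline{\mathcal{E}}_P(R))$ and all the parameter conditions verified, Theorem~\ref{thm:main_thm} applies verbatim and yields both $x(t) \in \Int(\overline{\mathcal{E}}_P(R))$ for all $t \geq 0$ and the existence of $T_r \geq 0$ with $x(t) \in \Int(\underline{\mathcal{E}}_P(ar))$ for $t \geq T_r$.

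I do not anticipate a real obstacle: all the analysis is already contained in Theorem~\ref{thm:main_thm} and Proposition~\ref{thm:mu_upper_bound}, and what remains is the bookkeeping that the constants match — the choice $b(a) = T_s$ absorbing exactly the extra additive $T_s$ in \eqref{eq:mu_T0T} (the worst-case mismatch produced by a switch inside the sampling interval that contains $T_0$), and the choice $n \geq 1 + D_P/C_P$ keeping the mismatch fraction $1/n$ small enough for the common Lyapunov function to retain net decay. The only point needing a little care is the borderline case $n = 1 + D_P/C_P$, where \eqref{eq:L_inequality1} holds with equality; there one leans on the strict inequalities furnished by Proposition~\ref{thm:mu_upper_bound}.
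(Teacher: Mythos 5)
Your proof is correct and is essentially the paper's own argument written out in full: the paper's proof likewise consists of taking $L=1/n$, noting that the chosen $a$ gives $b(a)=T_s$, invoking Proposition~\ref{thm:mu_upper_bound} to verify \eqref{eq:mu_inequality1} and \eqref{eq:mu_b_L_condition}, and then applying Theorem~\ref{thm:main_thm}. The only caveat is the borderline case $n=1+D_P/C_P$ that you yourself flag: there $L=1/n$ satisfies \eqref{eq:L_inequality1} only with equality, so Theorem~\ref{thm:main_thm} does not apply as a black box, and the strict inequalities of Proposition~\ref{thm:mu_upper_bound} do not by themselves repair this (they give no uniform $L'<1/n$, and the convergence $V(x(t))\to 0$ in the proof of Lemma~\ref{lem:tor0} needs the exponent to tend to $-\infty$) --- but this is a gap the paper's own two-line proof shares and does not even acknowledge.
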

\begin{proof}
If $n$ and $a$ are defined as above, 
Proposition \ref{thm:mu_upper_bound} shows that
$\mu$ satisfies
\eqref{eq:mu_inequality1} and
\eqref{eq:mu_b_L_condition} for every $\sigma$ with dwell time $nT_s$.
Hence the conclusion of Theorem \ref{thm:main_thm} holds.
\end{proof}

The next result implies that the upper bounds obtained in 
Proposition \ref{thm:mu_upper_bound} are close to the supremum if
the sampling period is enough small.
\begin{proposition}
\label{thm:mu_lower_bound}
Fix $\varepsilon > 0$ and $n \in \mathbb{N}$.
For any $T \geq 0$,
there exist $\sigma$ with dwell time $nT_s$ and $t \geq T$ such that
\begin{align*}
\mu(0,t) \geq \frac{1}{n}t - \left( \frac{T_s}{n} + \varepsilon \right).
\end{align*}
Furthermore, for any $T \geq 0$,
there exist $\sigma$ with dwell time $nT_s$, $T_0 \geq 0$ with
$\sigma(T_0) \not= \sigma([T_0]^-)$, and
$t \geq T_0 + T$ such that 
\begin{align}
\label{eq:lowerbound2}
\mu(T_0,t) \geq T_s + \frac{1}{n}(t-T_0) - \left( \frac{T_s}{n} + \varepsilon \right).
\end{align}
\end{proposition}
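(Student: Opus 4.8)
The plan is to construct, for each of the two claims, an explicit switching signal with dwell time $nT_s$ that accumulates mismatch at nearly the worst-case rate $1/n$, and to exploit the fact that whenever a switch occurs at a time $\tau$ strictly inside a sampling interval $(kT_s,(k+1)T_s)$, the mismatch condition $\sigma(\cdot)\neq\sigma([\cdot]^-)$ holds on the whole remaining subinterval $(\tau,(k+1)T_s)$, whose length can be made as close to $T_s$ as we wish by placing $\tau$ just after a sampling instant. First I would set up the periodic pattern: switch once every $nT_s$, at times $\tau_0,\tau_0+nT_s,\tau_0+2nT_s,\dots$, where $\tau_0\in(0,T_s)$ is chosen so that each switch falls a tiny amount $\rho>0$ after a sampling instant (so $\tau_0=\rho$). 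Each switch then produces a mismatch interval of length $T_s-\rho$ (the controller only learns of the switch at the next sampling time), and these intervals are disjoint and spaced $nT_s$ apart. Over a window of length $t$ this yields roughly $t/(nT_s)$ such intervals, so $\mu$ accumulates at rate $(T_s-\rho)/(nT_s)=1/n-\rho/(nT_s)$; choosing $\rho$ small (depending on $\varepsilon$) and $t$ large (a multiple of $nT_s$ exceeding $T$) makes the deficit from $t/n$ at most $T_s/n+\varepsilon$, as required. The bookkeeping here is just counting complete periods inside $[0,t)$ and bounding the leftover fractional period, which contributes at most one more $T_s$-length interval — exactly the $T_s/n$ slack that appears in the bound.

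For the second claim I would do the same but additionally arrange that one distinguished switch occurs at a time $T_0$ that is itself a sampling instant, i.e. $T_0=mT_s$ for suitable $m$, so that $\sigma(T_0)\neq\sigma([T_0]^-)$ holds (the value jumps exactly at the sample, and by right-continuity $\sigma(T_0)$ is the new mode while $\sigma([T_0]^-)=\sigma(T_0^-)$ is the old one — wait, one must be slightly careful: $[T_0]^-=T_0$ when $T_0=mT_s$, so $\sigma([T_0]^-)=\sigma(T_0)$ and there is no mismatch at that instant). To genuinely force $\sigma(T_0)\neq\sigma([T_0]^-)$ I would instead take $T_0$ just after a sampling instant, $T_0=m T_s+\rho$, so that $[T_0]^-=mT_s$ and the switch at $mT_s+\rho$ makes $\sigma(T_0)$ differ from $\sigma(mT_s)=\sigma([T_0]^-)$. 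Then the contribution of this particular switch to $\mu(T_0,t)$: starting the count at $T_0$ itself, the mismatch interval $(mT_s+\rho,(m+1)T_s)$ of length $T_s-\rho$ is counted in full (since $T_0=mT_s+\rho$ is its left endpoint), giving an extra near-$T_s$ term on top of the rate-$1/n$ accumulation from all subsequent periodic switches — this is precisely the $T_s$ additive constant in \eqref{eq:lowerbound2}. Again one chooses $\rho$ small in terms of $\varepsilon$ and $t-T_0$ a large multiple of $nT_s$ exceeding $T$.

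The main obstacle I anticipate is not conceptual but careful interval accounting at the two ends: making sure the count of complete mismatch intervals strictly inside $[0,t)$ (resp. $[T_0,t)$) is at least $\lfloor t/(nT_s)\rfloor$ minus a bounded correction, and verifying that the signal as constructed genuinely has dwell time $nT_s$ (consecutive switches are exactly $nT_s$ apart, and there is no switch in $[0,nT_s)$ — which forces $\tau_0=\rho\ge$ nothing problematic, but I must start the first switch at time $\ge nT_s$, so I would place switches at $nT_s+\rho,2nT_s+\rho,\dots$, which only shifts the window and does not change the asymptotic rate). A secondary subtlety is the direction of the inequality and the interval convention $\mu(0,t)$ versus $\mu(t,0)$ in the statement: since $\mu(\tau_1,\tau_2)$ was defined for $\tau_1>\tau_2$, I read $\mu(0,t)$ and $\mu(T_0,t)$ as $\mu(t,0)$ and $\mu(t,T_0)$ respectively, and the construction gives a lower bound on exactly these quantities. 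Once the pattern and the choice of $\rho$ are fixed, the remaining estimates are elementary.
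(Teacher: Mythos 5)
Your construction is essentially the paper's own proof: periodic switches spaced exactly $nT_s$ apart, each placed a small offset $\rho>0$ after a sampling instant so that every switch contributes a mismatch interval of length $T_s-\rho$, with the distinguished switch for the second claim located at $T_0$ itself just after a sampling instant so that $\sigma(T_0)\neq\sigma([T_0]^-)$ and the extra near-$T_s$ term is counted in full. The one detail to pin down is that $\rho$ must shrink with the number $m$ of switches in the window (the paper takes the offset to be $\varepsilon/m$, resp.\ $\varepsilon/(m+1)$), since a fixed $\rho$ would make the total deficit $m\rho$ grow with $t$; with that choice and $t=mnT_s+T_s$ your accounting goes through verbatim.
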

\begin{proof}
This is again a routine calculation; see Appendix A.2.
\end{proof}

\begin{corollary}
\label{cor:small_dwelltime}
There exist a switching signal with dwell time $T_s$ such
that $\mu(0,t) \approx t$ for a sufficiently large $t>0$.
\end{corollary}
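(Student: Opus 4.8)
The plan is to recognize Corollary~\ref{cor:small_dwelltime} as the endpoint case $n=1$ of Proposition~\ref{thm:mu_lower_bound}, since for $n=1$ the dwell time $nT_s$ is exactly $T_s$. Hence the entire argument reduces to specializing that proposition and then pinning down the informal symbol ``$\approx$''.

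First I would fix an arbitrary tolerance $\delta>0$. Taking some fixed $\varepsilon>0$ and applying the first assertion of Proposition~\ref{thm:mu_lower_bound} with $n=1$, for every sufficiently large $T\ge 0$ we obtain a switching signal $\sigma$ with dwell time $T_s$ and a time $t\ge T$ such that
\begin{equation*}
\mu(t,0)\ \ge\ t-(T_s+\varepsilon).
\end{equation*}
On the other hand $\mu(t,0)<t$ always, directly from the definition of $\mu$ as the length of a subset of $[0,t)$ (equivalently, the $n=1$ instance of Proposition~\ref{thm:mu_upper_bound}). Combining the two bounds,
\begin{equation*}
1-\frac{T_s+\varepsilon}{t}\ \le\ \frac{\mu(t,0)}{t}\ <\ 1 ,
\end{equation*}
and since $t\ge T$ may be chosen as large as we wish while $T_s+\varepsilon$ is a fixed constant, the ratio $\mu(t,0)/t$ exceeds $1-\delta$ once $T$ is large enough. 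As $\delta>0$ was arbitrary, this is precisely the content of ``$\mu(0,t)\approx t$ for a sufficiently large $t$''.

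If one wants a single signal realizing $\limsup_{t\to\infty}\mu(t,0)/t=1$, I would instead construct one directly along the lines of Proposition~\ref{thm:mu_lower_bound}: switch once in each sampling interval a small offset $\epsilon>0$ after the sampling instant, so that the mode mismatch fills a fraction $1-\epsilon/T_s$ of that interval and consecutive switches are spaced exactly $T_s$ apart (so the dwell time is $T_s$); then periodically skip one sampling interval so that the offset may be reset to a still smaller value without violating the dwell-time constraint. Letting the offsets shrink to $0$ while the skip-free stretches lengthen drives the density of mismatch to $1$.

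I do not anticipate a genuine mathematical obstacle: everything rests on Proposition~\ref{thm:mu_lower_bound}, whose (routine) proof is in Appendix~A.2. The only delicate point is interpretive, namely to make ``$\approx$'' precise and to observe that $T_s$ enters merely as a fixed additive constant, negligible against the growing horizon $t$.
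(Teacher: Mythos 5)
Your proposal is correct and matches the paper's intent: the paper gives no separate proof of Corollary~\ref{cor:small_dwelltime}, presenting it as the immediate $n=1$ specialization of Proposition~\ref{thm:mu_lower_bound}, which is exactly what you do (together with the trivial bound $\mu(0,t)<t$ to pin down the meaning of ``$\approx$''). Your extra remark that the proposition only supplies a signal depending on the horizon $T$, and your sketch of a single signal achieving $\limsup_{t\to\infty}\mu(0,t)/t=1$, is a reasonable refinement but not needed for the corollary as stated.
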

This is of course the expected result.
Corollary \ref{cor:small_dwelltime} shows that
if the dwell time does not exceed the sampling period,
the information about the switching signals is meaningless to
stabilize the switched system.

 \section{Numerical Example}
Consider the switched system with the following two modes:
\begin{gather*}
A_1 = 
\frac{1}{6}
\begin{bmatrix}
1 & -2\\ -3 & 2
\end{bmatrix}, \quad
B_1 = 
\frac{1}{6}
\begin{bmatrix}
-4 \\ 3
\end{bmatrix}, \\
A_2 = 
\begin{bmatrix}
1 & -5\\ 1 & 2
\end{bmatrix}, \quad
B_2 = 
\begin{bmatrix}
1 \\ -1
\end{bmatrix}.
\end{gather*}
The state feedback gains $K_1$ and $K_2$ are given by
\begin{equation}
\label{eq:state_feedback_gain_Ex}
K_1 = \begin{bmatrix}
1.38 &  -1.86
\end{bmatrix},\quad
K_2 = \begin{bmatrix}
-2.80 &  3.77
\end{bmatrix}.
\end{equation}
We computed the above regulator gains by minimizing the cost
\begin{align*}
\int^{\infty}_{0} \big( x(t)^{\top}x(t) + u(t)^2 \big) dt.
\end{align*}
Note that both $A_1 - B_1K_2$ and $A_2 - B_2K_1$ are not Hurwitz:
$A_1 - B_1K_2$ has one unstable eigenvalue $4.4538$ and
$A_2 - B_2K_1$ has two unstable eigenvalues $1.4091$ and
$4.7750$.

The sampling period $T_s$ is given by $T_s = 0.025$, 
and we used the following logarithm quantizer:
Let the state $x$ be 
$x = [x_1~~x_2]^{\top}$.
For a nonnegative integer $n$, 
the quantized state 
$Q(x) = [Q_1(x_1)~~Q_2(x_2)]^{\top}$
is defined by
\begin{equation*}
Q_i(x_i) =
\begin{cases}
 \frac{-\xi_0 (\kappa^n + \kappa^{n+1})}{2}
&  (-\xi_0 \kappa^{n+1} \leq x_i < -\xi_0\kappa^n)\\ 
0
& (-\xi_0 \leq x_i \leq \xi_0) \\
\frac{\xi_0 (\kappa^n + \kappa^{n+1})}{2}
& (\xi_0\kappa^n < x \leq \xi_0\kappa^{n+1}),
\end{cases}
\end{equation*}
where $\xi_0 = 0.08$ and $\kappa = 1.2$.

Set $C = 1$, $R=68.6$, and $r=0.175$ in Assumption~\ref{ass:subsystem_QA}.
In conjunction with 
the scheduling function having the revisitation property in \cite{Liberzon2004},
the randomized algorithm in \cite{Ishii2004} gave
\begin{equation*}
P = 
\begin{bmatrix}
2.9171 & 0.3489 \\ 
0.3489 & 3.6256
\end{bmatrix}.
\end{equation*}
In the randomized algorithm, 
we used $10^7$ samples in state for each run,
and five samples in time for each sampled state.
We stopped the algorithm when there is no update for an entire run.

Since we obtain $D = 61.02$ in \eqref{eq:dotVpq_bound} from the data above,
the resulting $n$ and $a$ in Theorem \ref{thm:dwell_sample_switching} are
$n=84$ and $a=1.321$.

A time response ($0 \leq t \leq 30$) was calculated for $x(0) = [-60~~50]^{\top}$ and 
$\sigma(0)=1$.
Fig.~\ref{fig:LQ11fig} depicts the state trajectory $x$ 
of the switched system~\eqref{eq:SLS} with dwell time $84T_s = 2.1$.
The blue line indicates that the feedback gain designed for the active subsystem
was used, i.e.,
\begin{equation*}
(A_{\sigma(t)},B_{\sigma(t)},K_{\sigma([t]^-)})
=
(A_{1},B_{1},K_{1})
~\text{or}~
(A_{2},B_{2},K_{2}).
\end{equation*}
The red line shows that a switch led to the mismatch of the modes between
the plant and the feedback gain, i.e.,
\begin{equation*}
(A_{\sigma(t)},B_{\sigma(t)},K_{\sigma([t]^-)})
=
(A_{1},B_{1},K_{2})
~\text{or}~
(A_{2},B_{2},K_{1}).
\end{equation*}
The black lines in Fig.~\ref{fig:LQ11fig}
represent $\overline{\mathcal{E}}_P(R)$ and
$\underline{\mathcal{E}}_P(ar)$, respectively.

Here we see two conservative results: the dwell time $84T_s$
and the ellipsoid $\underline{\mathcal{E}}_P(ar)$ in Fig.~\ref{fig:LQ11fig3}.
Since we evaluate the increase and decrease of the Lyapunov function 
only by \eqref{eq:dotVpq_bound} and \eqref{eq:dotVp_bound},
the switching condition for stabilization become conservative.
In particular, we need to refine the upper bound \eqref{eq:dotVpq_bound}
in the case of mode mismatch.

As regards the ellipsoid $\underline{\mathcal{E}}_P(ar)$,
the trajectory in Fig.~\ref{fig:LQ11fig3} remains in
a smaller neighborhood of the origin. 
This conservative result is also due to 
the upper bound \eqref{eq:dotVpq_bound}; see Lemma \ref{lem:r_to_ar}.
Another reason is 
the nonlinearity of quantization
and this is observed in the non-switched case \cite{Ishii2004} as well.
If we use multiple Lyapunov functions instead
of a common Lyapunov function, we may reduce this conservativeness.
Details, however, are more involved, so this extension is a subject for future research.

\begin{figure}[b]
\centering
\subcaptionbox{Region $[-70, 10] \times [-10, 60]$ \label{fig:LQ11fig1}}
{\includegraphics[width = 6.5cm,clip]{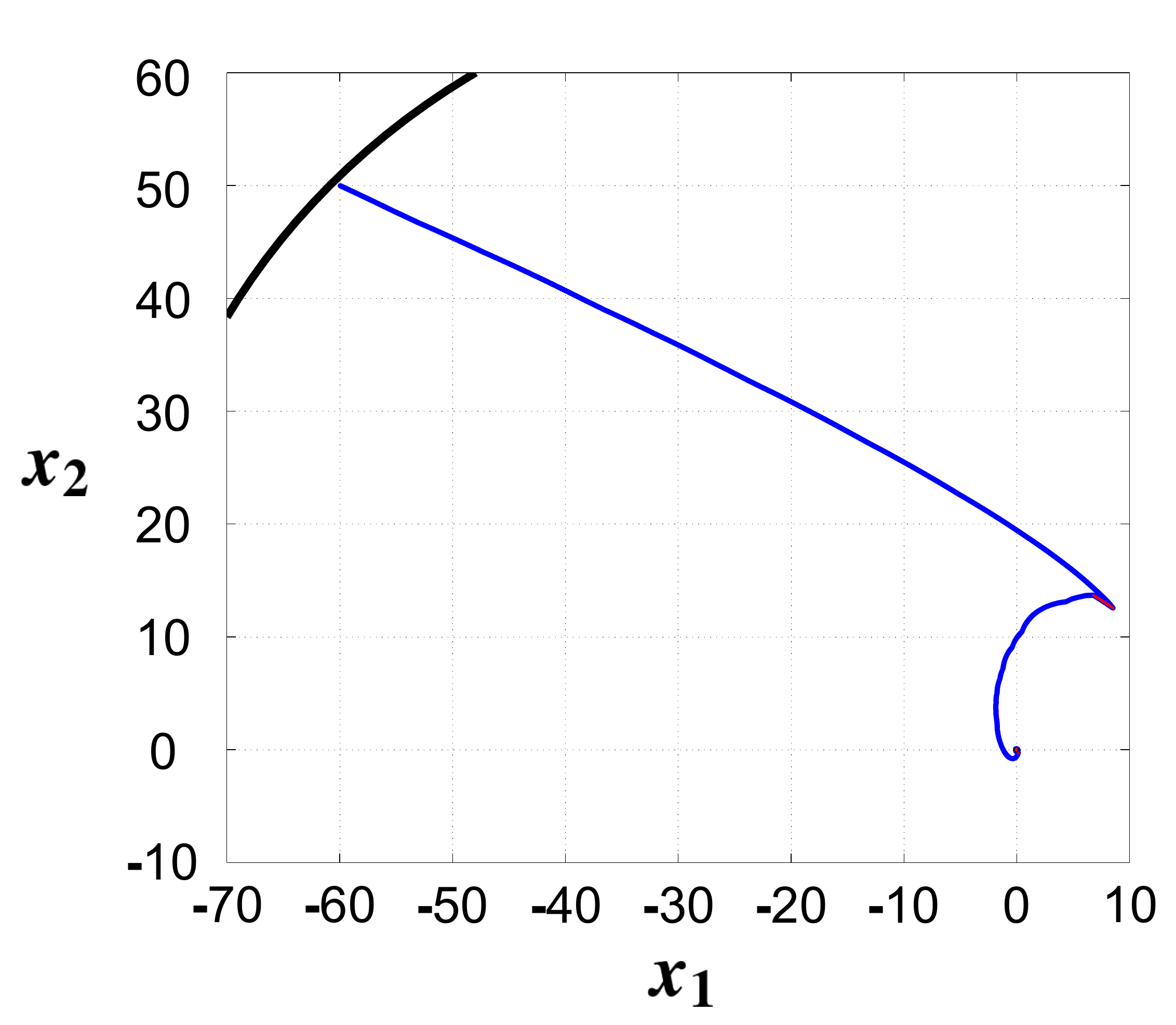}}

\subcaptionbox{Region $[-0.5, 0.3] \times [-0.8, 0.3]$ \label{fig:LQ11fig3}}
{\includegraphics[width = 5cm,clip]{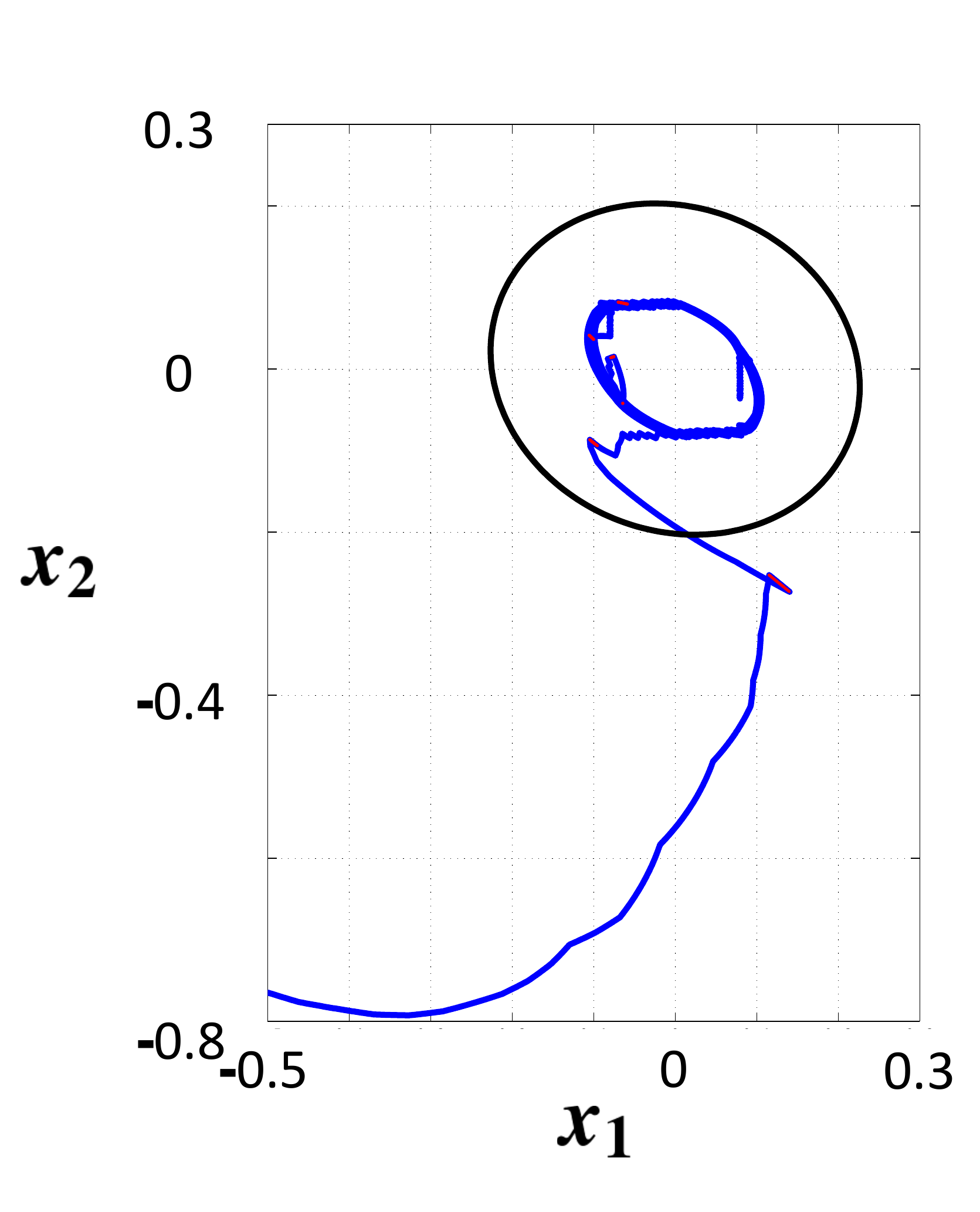}}
%
\caption{The trajectory $x$ with $x(0) = [-60~~50]^{\top}$ and $\sigma(0)=1$
\label{fig:LQ11fig}}
\end{figure}

\section{Concluding Remarks}
We have analyzed the stability of a sampled-data switched systems with 
a memoryless quantizer.
The proposed method uses a common Lyapunov function 
computed efficiently by a randomized algorithm.
The common Lyapunov function leads to the switching conditions on the total mismatch time
for quantized state feedback stabilization.
We have also examined the relationship between the mismatch time
and the dwell time of the switching signal.
Future works will focus on improving the upper bound on the time
derivative of the Lyapunov function in the mismatch case; and
analyzing the stability by using
multiple Lyapunov functions and an average dwell time property.

\appendix
\section{Bound on Total Mismatch Time}    
\subsection{Proof of Proposition \ref{thm:mu_upper_bound}} 
Let us first prove \eqref{eq:mu_0T}.
It is clear that $\mu = 0$ if $\sigma$ has no discontinuities in the interval $(0,t)$.
So suppose $t_1,\dots,t_m$ are switching times in $(0,t)$.
Then since $t \geq mnT_s$, we obtain
\begin{equation*}
\mu(0,t) \leq \sum_{k=1}^{m} ([t_k]^-+T_s - t_k)
< mT_s \leq \frac{1}{n}t.
\end{equation*}
Hence \eqref{eq:mu_0T} holds.

Next we show \eqref{eq:mu_T0T}.
Since $\sigma(T_0) \not= \sigma([T_0]^-)$ and since
the dwell time is $nT_s \geq T_s$, it follows that
$\sigma$ has precisely one discontinuity in the interval $([T_0]^-, T_0]$.
Let us denote the switching time by $t_0$.

Suppose no switches occur in the interval $(T_0, t)$.
Then
\begin{equation*}
\mu(T_0, t) \leq [T_0]^- +T_s - T_0 < T_s,
\end{equation*}
and hence \eqref{eq:mu_T0T} holds.

Suppose $m$ switches occurs in the interval $(T_0, t)$, and let
$t_1,\dots,t_m$ be the switching times.
Define $\xi_k$ by
\begin{equation}
\label{eq:xi_def}
\xi_k = (t_{k+1} - t_k ) - nT_s
\end{equation}
for $k=0,\dots,m-1$.
Our dwell-time assumption shows that $\xi_k \geq 0$.
We also have
\begin{align}
t - T_0 &= 
(t - t_m) + \sum_{k=0}^{m-1}(t_{k+1} - t_{k}) - (T_0 - t_0) \notag \\
&=
(t - t_m) + \sum_{k=0}^{m-1}(\xi_{k} + nT_s) - (T_0 - t_0) \notag\\
&=
mnT_s + (t-t_m) + \sum_{k=0}^{m-1}\xi_{k} - (T_0 -t_0).
\label{eq:t-T_0}
\end{align}
It is best to split the argument two cases.

First we study the case
\begin{equation}
\label{eq:case1}
(t-t_m) + \sum_{k=0}^{m-1}\xi_k \geq T_0 -t_0.
\end{equation}
Combining \eqref{eq:case1} with \eqref{eq:t-T_0}, 
we obtain $t-T_0 \geq mnT_s$, and hence
\begin{align*}
\mu(T_0, t) &\leq  ([T_0]^-+T_s - T_0)
+ \sum_{k=1}^{m} ([t_k]^-+T_s - t_k) \\
&< (m+1)T_s \leq T_s + \frac{1}{n}(t-T_0),
\end{align*}
which is a desired inequality \eqref{eq:mu_T0T}.

Let us next investigate the case
\begin{equation}
\label{eq:case2}
(t-t_m) + \sum_{k=0}^{m-1}\xi_k < T_0 -t_0.
\end{equation}
Since $\sum_{k = 0}^{m-1}\xi_{k} < T_0$ by \eqref{eq:case2},
each switching time $t_k$ ($k=1,\dots,m$) satisfies
\begin{align*}
t_k &= (t_k - t_{k-1}) + \dots + (t_1 - t_0) \\
&= \sum_{\ell = 0}^{k-1} (\xi_{\ell} + nT_s)
\leq  \sum_{\ell = 0}^{m-1}\xi_{\ell} + knT_s 
< T_0 + knT_s.
\end{align*}
In conjunction with the assumption of the dwell time, 
this leads to
\begin{equation}
\label{eq:t_k_bound}
t_0+knT_s \leq t_k < T_0+knT_s
\end{equation}
for $k=1,\dots,m$. 
Since 
\begin{equation}
\label{eq:t0-T0}
[t_0]^- = [T_0]^- < t_0 \leq T_0 < [T_0]^- + T_s, 
\end{equation}
\eqref{eq:t_k_bound} shows that
\begin{align}
\mu([t_k]^-, [t_{k+1}]^-) 
&=
\mu([t_k]^-, [t_k]^-+T_s) \notag \\
&\leq
\mu(t_0, [t_0]^-+T_s) \notag \\
&=
[t_0]^- +T_s - t_0
\label{eq:t_k-t_k+1_bound}
\end{align}
for $k = 1,\dots,m-1$.

Moreover, \eqref{eq:case2} and \eqref{eq:xi_def} give
\begin{equation*}
t < (T_0 - t_0) + t_m -\sum_{k=0}^{m-1} \xi_{k}
= T_0 + mnT_s.
\end{equation*}
If we combine this with
$t > t_m$ and \eqref{eq:t_k_bound},
we see that
\begin{equation*}
t_0 + mnT_s \leq t_m < t < T_0 + mnT_s.
\end{equation*}
Hence 
\begin{equation}
\label{eq:t-t_m_bound}
\mu([t_m]^-,t) = t - t_m< T_0 - t_0.
\end{equation}

Since $t- t_m > 0$ and $\xi_k\geq 0$,
it follows from \eqref{eq:t-T_0} that
\begin{align*}
m < \frac{t-t_0}{nT_s}.
\end{align*}
By \eqref{eq:t_k-t_k+1_bound} and
\eqref{eq:t-t_m_bound}, we have
\begin{align}
\mu(T_0,t) 
&= \mu(T_0, [t_1]^-) + \sum_{k=1}^{m-1} \mu([t_k]^-,[t_{k+1}]^-)
+ \mu([t_m]^-,t) \notag \\ 
&< ([T_0]^-+T_s - T_0) + (m-1)([t_0]^-+T_s - t_0)+ (T_0 - t_0) \notag \\
&< \frac{t-t_0}{n}\frac{[t_0]^-+T_s - t_0}{T_s}
< \frac{t-[t_0]^-}{n}.
\label{eq:mu(T_0,t)_case2_bound}
\end{align}
Moreover, \eqref{eq:t0-T0} gives
\begin{align*}
T_s + \frac{t-T_0}{n}- \frac{t-[t_0]^-}{n}
= T_s - \frac{T_0 - [t_0]^-}{n} 
> T_s - \frac{T_s}{n} \geq 0.
\end{align*}
Hence 
\eqref{eq:mu_T0T} follows from \eqref{eq:mu(T_0,t)_case2_bound}.
\hfill $\blacksquare$

\subsection{Proof of Proposition \ref{thm:mu_lower_bound}}
Suppose that $m \in \mathbb{N}$ satisfies $mnT_s \geq T$.

To prove the first assertion of the theorem, let
$\sigma$ have discontinuities at $knT_s + \varepsilon/m$ $(k=1,\dots,m)$.
If we define
$t = mnT_s+T_s$, then $t \geq T$ and we obtain
\begin{equation*}
\mu(0,t) = m\left(T_s - \frac{\varepsilon}{m}\right) = mT_s -\varepsilon =
\frac{1}{n}t - \left( \frac{T_s}{n} + \varepsilon \right).
\end{equation*}

To prove the second assertion, let $T_0 - [T_0]^- = \varepsilon/(2m+1)$ and let
$\sigma$ have discontinuities at
\begin{equation*}
T_0+ knT_s + \frac{\varepsilon}{2(m+1)}
= [T_0]^- + knT_s + \frac{\varepsilon}{m+1}.
\end{equation*}
for $k=1,\dots,m$.
If we let $t = T_0+mnT_s+T_s$, then $t \geq T_0 +T$ and we have
\begin{align*}
\mu(T_0,t) 
&= \left(T_s - \frac{\varepsilon}{2(m+1)}\right) + 
m\left(T_s - \frac{\varepsilon}{m+1} \right) \\
&\geq (m+1)T_s - \varepsilon \\
&= T_s + \frac{1}{n}(t-T_0) 
- \left( \frac{T_s}{n} + \varepsilon\right),
\end{align*}
which is the desired inequality \eqref{eq:lowerbound2}.
\hfill $\blacksquare$

\bibliographystyle{plain}

\end{document}